\def\cal{\mathcal}
\newcommand{\bC}{{\mathbb C}}
\newcommand{\vomega}{\boldsymbol{\omega}}
\newcommand{\vphi}{\boldsymbol{\phi}}
\newcommand{\vpsi}{\boldsymbol{\psi}}
\newcommand{\calO}{{\mathcal O}}
\newcommand{\interchange}{\iota}
\newcommand{\R}{{\mathbb R}}
\def\widebar{\overline}
\def\blamda{\widebar{\lambda}}
\def\realpart{\operatorname{\sf Re}}
\def\impart{\operatorname{\sf Im}}
\newcommand{\sn}{\operatorname{sn}}
\newcommand{\dn}{operatorname{dn}}
\newcommand{\cn}{operatorname{cn}}
\newcommand{\rd}{\mathrm{d}}            
\newcommand{\re}{\mathrm{e}}            
\newcommand{\ri}{\mathrm{i}}            
\newcommand{\rv}{\mathrm{v}}
\theoremstyle{plain}
\newtheorem{The}{Theorem}
\newtheorem{prop}[The]{Proposition}
\begin{document}
\title{Spectral stability of periodic NLS and CGL solutions}
\date{\today}

\author{{\sc T. Ivey\footnote{Supported by NSF grant DMS-0608587}
 and S. Lafortune\footnote{Supported by NSF grant DMS-0509622}}\\
 Dept. of Mathematics,\\
College of Charleston\\
Charleston, SC 29424, USA
}
\maketitle

\begin{abstract}
We consider periodic traveling wave solutions  to the focusing nonlinear
Schr\"odinger equation (NLS) that have been shown to 
persist when the NLS is perturbed to the complex Ginzburg-Landau equation (CGL). 
In particular, we show 
that these periodic traveling waves are spectrally stable solutions of NLS 
with respect to periodic perturbations. Furthermore, we use an argument 
based on the Fredholm alternative to find an instability criterion for the 
persisting solutions to CGL.
\end{abstract}

\section{Introduction}

The complex Ginzburg-Landau equation (CGL) is used in several contexts such as chemistry, fluid dynamics, and optics \cite{Ag91,Ko91,Ku84}. Furthermore, CGL was shown to be the generic equation modeling the slowly varying amplitudes of post-critical Rayleigh-B\'enard convection \cite{NeWh69}.
CGL is not in general an integrable equation. However, a successful strategy when it
comes to the study of existence and stability of solutions is to consider CGL as a perturbation
of the nonlinear Shr\"odinger equation (NLS). One can then take advantage of the fact 
that NLS is completely integrable. The drawback of that method is that the coefficients of the dissipative terms 
in CGL must remain small.

In the case of localized traveling wave solutions of CGL such as pulses and holes, this strategy has been implemented in several instances (see for example \cite{KaSa98,KaRu00,LeFa97}). 
In the case of periodic traveling waves, it has recently been shown that classes of solutions of NLS persist when considering CGL perturbations \cite{CrLeLu04a,CrLeLu04b,CrLeLu04c}. The main purpose of this paper is to study the stability of one of the classes of such solutions, namely the cnoidal wave solutions. We will restrict ourselves to the study of stability with respect to perturbations that have the same period as the waves which amounts to restricting the spatial domain of the solution to an interval of length equal to the period.

Our results are twofold.  First, we describe the spectrum of the linearization of the NLS about these cnoidal waves defined by (\ref{U0def}) below.  In particular, we show that, when considering perturbations that have the same periodicity as the wave, the cnoidal solutions are spectrally stable, i.e., the point spectrum of the linear operator arising from the linearization do not have eigenvalues with strictly positive real part.  In fact, we show that the location of the point spectrum is restricted to the imaginary axis. 
In addition, we obtain the continuous spectrum and show that it intersects the right side of the complex plane, showing that  cnoidal wave solutions (\ref{U0def}) are unstable when considered on the whole line. On the one hand, this is consistent with the fact that complex double points in the spectrum  of the Lax pair usually give rise to instabilities \cite{ErFoMc90,FoLe86,LiMc94}. On the other hand, it is also consistent with recent work \cite{CaDe06} in which (\ref{U0def}) is numerically shown to be unstable.

Our results concerning the spectrum of linearized NLS are summarized in Theorems 
\ref{the2} and \ref{the1} and illustrated in Figure \ref{fig3}.
The main tool we use is the fact that solutions to linearized NLS can be constructed
using squared AKNS eigenfunctions \cite{FoLe86,MOv}, for which explicit
formulas are available \cite{BBEIM,CI05}.
The main difficulty lies in the fact that one has to show that all the solutions of the eigenvalue problem are obtained that way; in fact, this is not true for the 
zero eigenvalue.  By using an alternate route to construct solutions,
and employing the technique of the Evans function (as defined in \cite{Ga97}), 
we show that the zero eigenvalue has geometric multiplicity.
 
Secondly, we use the knowledge of the spectrum of the linearized NLS to 
carry out a perturbative study of the spectrum of linearized CGL.
More precisely, we linearize about the persisting periodic solution of CGL, and we use the Evans function technique and the Fredholm alternative to identify
criteria for linear instability.  Under these conditions, eigenvalues of the linearized CGL bifurcate from the origin to the right side of the complex plane as NLS is perturbed to CGL. These results are summarized in Theorem \ref{newthm} and illustrated in Figures \ref{fig2} and \ref{fig7}.

\bigskip
The article is arranged as follows. In \S\ref{sec1},  we define the concept of Evans function for the linearized NLS about a periodic traveling wave. We then determine the spectrum of the linearized NLS about the cnoidal waves. In \S\ref{PS}, we describe the persisting solution of CGL and the corresponding linearization.  In \S\ref{near}, we use the concept of Evans function and  the Fredholm alternative to study eigenvalues emerging from the origin when NLS is perturbed.  The results of more detailed calculations are relegated to Appendix \ref{app:B} and the proof of Theorem \ref{the1} is given in Appendix \ref{app:nls}.

\subsection*{Acknowledgements}
The authors wish to thank Annalisa Calini, John Carter, and Joceline Lega for discussions and suggestions.

\section{Evans function for the linearized NLS about a periodic traveling wave}
\label{sec1}

\def\dn{\operatorname{dn}}
\def\cn{\operatorname{cn}}
\def\sn{\operatorname{sn}}

The NLS equation is written in system form as
\begin{equation}\label{NLS}
\begin{aligned}
\ri q_{1t}+q_{1xx}+2q_1^2 q_2& =0\\
-\ri q_{2t}+q_{2xx}+2q_1 q_2^2& =0.
\end{aligned}
\end{equation}
The reality condition $q_2=\overline{q_1}$ gives the focusing NLS for $q=q_1$.

We make a change of variables $q_1 = e^{-\ri \alpha t} u_1, q_2 = e^{\ri \alpha t} u_2$,
for a real constant $\alpha$, and obtain the following system:
\begin{equation}\label{NLSc}
\begin{aligned}
\ri u_{1t}+\alpha u_1+u_{1xx}+2u_1^2 u_2& =0\\
-\ri u_{2t}-\alpha u_2+u_{2xx}+2u_1 u_2^2& =0.
\end{aligned}
\end{equation}
This system has several $T$-periodic traveling wave solutions of the
form $u_1=\overline{u_2}=U_0(x-ct)$ where $U_0$ is expressible in terms of elliptic functions.
However, these
only persist as $T$-periodic solutions of the complex Ginzburg-Landau (CGL) equation when $c=0$ \cite{CrLeLu04a,CrLeLu04b,CrLeLu04c}. Thus, we limit our attention
here to the real solution of \eqref{NLSc} given by
\begin{equation}
\label{U0def}
u_1(x,t)={u_2}(x,t)=U_0(x), \qquad U_0(x) =  \delta k \cn(\delta\,x;k),
\end{equation}
with $\alpha=\delta^2(1-2 k^2)$ where $k$ is the elliptic modulus, $0<k<1$.
The period  of this solution is
$T = 4K/\delta$, where $K=K(k)$ is the complete elliptic integral of the first kind.

We then linearize the system (\ref{NLSc}) about the solution (\ref{U0def}).
Substituting $u_1=U_0+\rv_1$,
 $u_2=U_0+\rv_2$ in (\ref{NLSc}) and discarding higher-order terms in the $\rv$'s, we get
\begin{equation}\label{vLNS}
\begin{aligned}
\ri \rv_{1t}+\alpha \rv_1 + \rv_{1xx}+ 4 U_0^2 \rv_1+ 2U_0^2 \rv_2 &=0 \\
-\ri \rv_{2t}+\alpha \rv_2 + \rv_{2xx}+ 4 U_0^2 \rv_2+ 2U_0^2 \rv_1 &= 0.
\end{aligned}
\end{equation}
We substitute into \eqref{vLNS} the ansatz
\begin{equation}
\rv_1 = \re^{\ell t} w_1(x), \qquad \rv_2 = \re^{\ell t} w_2(x),
\end{equation}
where $w_1$ and $w_2$ are assumed to be independent of time $t$,
obtaining the following  ODE system for $w_1,w_2$
\[
\begin{aligned}
\ri \ell w_{1}+\alpha w_1 + w_{1xx}+ 4 U_0^2w_1+ 2 U_0^2 w_2 &=0,\\
-\ri \ell w_{2}+\alpha w_2 + w_{2xx}+ 4 U_0^2w_2+ 2U_0^2 w_1 &=0,
\end{aligned}
\]
which can be written as the eigenvalue problem
\begin{equation}
\mathcal{L}_0\vomega = \ell \vomega,\;\;\;\vomega=\left(w_1,w_2\right)^T
\label{eigenvalueproblem}
\end{equation}
where
\begin{equation}
\label{Ldef}
{\mathcal L}_0=J\left[\frac{\rd^2}{\rd x^2}+ \begin{pmatrix}4U_0^2+\alpha & 2 U_0^2 \\
2 U_0^2 & 4U_0^2+\alpha
\end{pmatrix}\right],\;\;\;J=\begin{pmatrix} \ri & 0 \\ 0 & -\ri \end{pmatrix}.
\end{equation}
Since we are interested in periodic perturbations, we define the point spectrum of the eigenvalue problem (\ref{eigenvalueproblem}) to consist of values of $\ell$ for which $\vomega$ is an element of the space $\left(L_2(T)\right)^2$ of two-dimensional vectors with components in the set of $T$-periodic square integrable functions.

In order to define the Evans function, we rewrite the ODE system (\ref{eigenvalueproblem}) as the following first-order linear system with $T$-periodic coefficients:
\begin{equation}\label{fourbysys0}
\dfrac{\rd}{\rd x} {\bf{w}}
={\cal A}(\ell,x)\,{\bf{w}},\;\;{\bf{w}}=
\left( w_1,\; w_2 ,\; w_1' ,\;w_2' \right)^T,
\end{equation}
where ${\cal A}(\ell,x)$ is the $4\times 4$ matrix given by
\begin{equation}
{\cal A}(\ell,x)=
\begin{pmatrix}
\label{A}
0& 0 & 1 & 0 \\
0 & 0 & 0 & 1 \\
-(\alpha+\ri \ell + 4 U_0^2) & -2U_0^2 & 0 & 0\\
-2 U_0^2 &  -(\alpha -\ri \ell + 4U_0^2) & 0 & 0
\end{pmatrix}.
\end{equation}
  Let $\Phi(x;\ell)$ be a fundamental
matrix for this system, with $\Phi(0;\ell)=I$.  Then the Evans function  \cite{Ga97} is defined to be
\begin{equation}
\label{EvansDef1}
M(\kappa,\ell) = \det\left(\Phi(T;\ell) - e^{\ri \kappa} I\right).
\end{equation}
If $M(\kappa,\ell) = 0$ for some $\kappa\in \R$ then the system (\ref{fourbysys0}), or equivalently (\ref{eigenvalueproblem}), has a bounded solution. The zero set of $M(\kappa,\ell)$ thus corresponds to 
the continuous spectrum of $\mathcal{L}_0$.
However, we restrict the eigenvalue problem (\ref{eigenvalueproblem}) to $T$-periodic square integrable functions  which corresponds to setting $\kappa=0$. We thus define
\begin{equation}
\label{EvansDef}
M_0(\ell) \equiv M(0,\ell).
\end{equation}
The zero set of $M_0(\ell)$ corresponds to the point spectrum of the eigenvalue problem (\ref{eigenvalueproblem}) and the multiplicity of the zero of $M_0$ corresponds to the algebraic multiplicity of each eigenvalue \cite{Ga97}.

For all but a few exceptional values of $\ell$, the Baker eigenfunctions associated to periodic
traveling wave solutions of NLS may be used to construct a fundamental
solution matrix for the system \eqref{fourbysys0}.  (Among the exceptions
is the important case $\ell=0$, which is treated in \S\ref{eps=0}.)
Then the Floquet
discriminant of the AKNS system may be used to determine the zeros
of the Evans function. The following theorem, which establishes the spectral
stability of the NLS solution  (\ref{U0def}) with respect to $T$-periodic 
perturbations, is proven in Appendix \ref{app:nls}.
\begin{The}
\label{the2}
The zero set of the Evans function (\ref{EvansDef}) which corresponds to the point spectrum 
of $\mathcal{L}_0$ defined in (\ref{eigenvalueproblem}) consists of a countably infinite number of points on the imaginary axis in the $\ell$-plane. 
With the possible exception of four values, the locations of these points is 
determined by the formula $\ell=4\ri \mu$ where 
\begin{equation}\label{mupoly}\mu^2=(\lambda^2-\lambda_1^2)(\lambda^2-\overline{\lambda_1}^2),
\end{equation}
and $\lambda$ is a value for which the Floquet discriminant 
$\Delta(\lambda)$ of the AKNS system equals $\pm 2$ or $0$.
\end{The}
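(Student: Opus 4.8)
The plan is to exploit the complete integrability of NLS: the cnoidal wave $U_0$ is the potential of an AKNS (Zakharov--Shabat) spatial problem with spectral parameter $\lambda$, and I would reduce the eigenvalue problem (\ref{eigenvalueproblem}) to the Floquet theory of that $2\times2$ problem. First I would record the AKNS Lax pair for (\ref{NLS}) with $q_1=q_2=U_0$ and identify the four numbers $\pm\lambda_1,\pm\overline{\lambda_1}$ appearing in (\ref{mupoly}) as the branch points of the periodic AKNS spectrum attached to the $\cn$ potential; the explicit Baker eigenfunction formulas cited in the introduction provide these, together with the Floquet discriminant $\Delta(\lambda)$, in closed form.

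The central device is the squared-eigenfunction correspondence. For a Bloch (Baker) solution $\phi=(\phi_1,\phi_2)^T$ of the AKNS problem at parameter $\lambda$, the quadratic combinations formed from $\phi_1^2$, $\phi_1\phi_2$, $\phi_2^2$ yield a solution $\vomega$ of (\ref{eigenvalueproblem}); substituting such a combination into (\ref{eigenvalueproblem}) and reducing with the Lax equations forces the temporal eigenvalue to be $\ell=4\ri\mu$ with $\mu$ as in (\ref{mupoly}). Because the map $\lambda\mapsto\ell$ is governed by the quartic (\ref{mupoly}), a generic value of $\ell$ is the image of four values of $\lambda$, and the four associated squared Baker eigenfunctions form a basis of solutions of the $4\times4$ system (\ref{fourbysys0}). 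The construction degenerates precisely where these four points collide---at the branch points of (\ref{mupoly}) and at $\ell=0$---which is the source of the four exceptional values in the statement, the case $\ell=0$ being the one deferred to \S\ref{eps=0}.

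Next I would convert the $L_2(T)$ requirement into a condition on the AKNS Floquet data. Restricting (\ref{eigenvalueproblem}) to $T$-periodic functions is the condition $\kappa=0$ in (\ref{EvansDef1}), i.e. $\vomega$ must be fixed by the monodromy over one period. Writing the AKNS multipliers as $\rho^{\pm1}$ with $\rho+\rho^{-1}=\Delta(\lambda)$, a squared eigenfunction carries monodromy factor $\rho^{2}$, $1$, or $\rho^{-2}$, so---after accounting for the symmetries $\lambda\mapsto-\lambda$ and $\lambda\mapsto\overline\lambda$ of the construction---the $T$-periodicity of $\vomega$ holds exactly at the AKNS periodic and antiperiodic points, where $\Delta(\lambda)=\pm2$, and at the points where $\Delta(\lambda)=0$. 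Since these points form a countable discrete set, so does the resulting set of $\ell$, which establishes both the counting statement and the formula $\ell=4\ri\mu$.

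It remains to place the spectrum on the imaginary axis, which amounts to $\mu^2\ge0$ at every admissible $\lambda$. For the $\cn$ potential the periodic, antiperiodic, and double-point spectrum of the focusing AKNS operator lies on the real and imaginary axes, that is, at values with $\lambda^2\in\R$; this is the point at which explicit knowledge of the cnoidal spectral configuration is indispensable. Whenever $\lambda^2\in\R$ one has $\mu^2=(\lambda^2-\lambda_1^2)(\lambda^2-\overline{\lambda_1}^2)=|\lambda^2-\lambda_1^2|^2\ge0$, so $\mu\in\R$ and $\ell=4\ri\mu$ is purely imaginary. I expect the main obstacle to be this reality/completeness pair: showing that the squared eigenfunctions exhaust the solution space away from the four exceptional values---so that $M_0$ has no further zeros---and that the relevant AKNS spectrum is confined to $\{\lambda^2\in\R\}$; the exceptional values, including $\ell=0$, are exactly the points where the squared-eigenfunction basis degenerates and the direct Evans-function computation of \S\ref{eps=0} takes over.
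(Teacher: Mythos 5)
Your proposal follows the same route as the paper's Appendix \ref{app:nls}: squared Baker eigenfunctions of the AKNS system generate solutions of \eqref{fourbysys0}, the temporal eigenvalue is $\ell=4\ri\mu$, the four points of the spectral curve lying over a nonexceptional $\ell$ give a fundamental solution set, and reality of $\mu$ at the admissible $\lambda$-values places the spectrum on the imaginary axis. However, your key Floquet step contains a genuine gap. You set $\rho+\rho^{-1}=\Delta(\lambda)$ for the AKNS multipliers $\rho^{\pm1}$ and assert that a squared eigenfunction carries monodromy $\rho^{2}$, $1$, or $\rho^{-2}$, concluding that periodic solutions occur exactly when $\Delta=\pm2$ or $\Delta=0$. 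Under that convention the conclusion does not follow: the square of a Bloch solution has multiplier $\rho^{\pm2}$, which equals $1$ precisely when $\rho=\pm1$, i.e.\ $\Delta=\pm2$; at a point with $\Delta=0$ one has $\rho=\pm\ri$, so the squared eigenfunctions are \emph{antiperiodic} (multiplier $-1$), and the multiplier-$1$ object you invoke is the cross product of the two Bloch solutions at the same $\lambda$, which has temporal eigenvalue $\ell=0$ rather than $\ell=4\ri\mu$ and hence is unavailable. So your mechanism delivers only the $\Delta=\pm2$ zeros, and the $\Delta=0$ zeros---an infinite family asserted by the theorem---are appended without derivation. The point the paper is careful about is that the discriminant in \eqref{deltaformula} is \emph{not} the trace of the $T$-monodromy: it equals $2\cos\bigl(\tfrac{T}{2}\Omega_1(P)\bigr)$, whereas the squared Baker column has multiplier $e^{2\ri\Omega_1(P)T}$. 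The transfer-matrix computation (Proposition \ref{Wperiodic}) shows a periodic solution exists iff $\sin(\Omega_1(P)T)=0$, i.e.\ iff $\cos\bigl(\tfrac{T}{2}\Omega_1(P)\bigr)\in\{0,\pm1\}$, which is exactly $\Delta\in\{0,\pm2\}$; the $\Delta=0$ case is genuine, but it comes from this factor-of-two bookkeeping, not from counting multipliers $\rho^{\pm 2}$ against $\rho+\rho^{-1}=\Delta$.

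Two further points. First, the completeness assertion---that the four squared eigenfunctions span the solution space of \eqref{fourbysys0} away from the exceptional values, so that no zeros of $M_0$ are missed---you rightly single out as the main obstacle, but you leave it unaddressed; the paper settles it by explicit computation, proving $-g(P)=h(P)=\lambda$ (Proposition \ref{niftygh}), evaluating $\det W(0)=-16\,\lambda(P_1)\lambda(P_3)\left(f(P_1)^2-f(P_3)^2\right)^2$, and using the closed formula \eqref{formulaforf} for $f$ to verify $f(P_1)\neq\pm f(P_3)$ for nonexceptional $\mu\neq0$ (Proposition \ref{nonsingN}). Without some such argument the theorem's exhaustiveness claim (that the listed points are \emph{all} the zeros) is unproven. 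Second, your reality claim that the admissible spectrum satisfies $\lambda^2\in\R$ fails at the four branch points $\pm\lambda_1,\pm\overline{\lambda_1}$, which are periodic points lying off both axes; this does not damage the conclusion, since there $\mu=0$ and $\ell=0$ is still imaginary, but it must be handled as the paper does, treating the branch points separately from the real-axis double points and the (possibly imaginary-axis) $\Delta=0$ points.
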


Note that the branch point $\lambda_1$ is related to the solution (\ref{U0def}) by $\lambda_1 / |\lambda_1| = -k' + \ri k$ and $\delta=2\left|\lambda_1\right|$, where $k'=\sqrt{1-k^2}$ is
the complementary modulus
and the Floquet discriminant is defined in \eqref{deltaformula}.   
The exceptional values are those for which \eqref{mupoly}, as 
a polynomial equation for $\lambda$, has multiple roots.  Note that these
all correspond to imaginary values of $\ell$.

Figure \ref{fig3} shows the location of the spectrum in the $\ell$-plane for $\delta=2$ and three different 
values of $k$.
In addition, the continuous spectrum is also obtained in Appendix \ref{app:nls}. It consists of the imaginary axis and a figure eight centered at the origin (see Figure \ref{fig3}). Thus, this shows that
solution (\ref{U0def}) is unstable with respect to bounded perturbations and is consistent with the fact that complex double points in the spectrum  of the Lax pair usually give rise to instabilities \cite{ErFoMc90,FoLe86,LiMc94}. 

\section{The perturbed setting}
\label{PS}

\subsection{Periodic traveling wave solutions to CGL}
\label{PS1}

The CGL perturbation of the NLS system has the following form:
\begin{equation}\label{CGL}
\begin{aligned}
\ri q_{1t}+q_{1xx}+2q_1^2 q_2 =
\epsilon & [\ri r q_1 +\ri q_{1xx}-2\ri s q_1^2 q_2], \\
-\ri q_{2t}+q_{2xx}+2q_1 q_2^2 =
\epsilon & [-\ri r q_2 -\ri q_{2xx}+2\ri s q_1 q_2^2] .
\end{aligned}
\end{equation}
Here, $\epsilon>0$, and $r,s$ are real parameters.
Again, $q_2=\overline{q_1}$ is the reality condition for the focusing case.

We again make the change of variables $q_1 = e^{-\ri \alpha t} u_1, q_2 = e^{\ri \alpha t} u_2$,
and obtain the following system:
\begin{equation}\label{CGLc}
\begin{aligned}
\ri u_{1t}+\alpha u_1+u_{1xx}+2u_1^2 u_2& =\epsilon [\ri r u_1 +\ri u_{1xx}-2\ri s u_1^2 u_2]\\
-\ri u_{2t}-\alpha u_2+u_{2xx}+2u_1 u_2^2& =\epsilon [-\ri r u_2 -\ri u_{2xx}+2\ri s u_1 u_2^2] ,
\end{aligned}
\end{equation}
with the reality condition $u_2=\overline{u_1}=u$.

The solution to NLS given in (\ref{U0def})
persists as a $T$-periodic solution of the CGL (\ref{CGLc}) if $r$ and $s$ satisfy a certain
algebraic condition. More precisely, it was shown in \cite{CrLeLu04a,CrLeLu04b,CrLeLu04c} that for small $\epsilon>0$, the CGL equation \eqref{CGLc} has a solution which is $T$-periodic  in $x$ of the form
 \begin{equation}
 \label{Udef}
 u(x,t)= U(x)\;\;\; {\mbox{with}}\;\;\; U(x)=U_0(x)+\epsilon U_1(x)+{\cal{O}}(\epsilon^2),
 \end{equation}
  where $U_0$ is the solution of NLS given in (\ref{U0def}), provided that $r$ and $s$ satisfy
\begin{equation}
\label{req}
 r={\frac {{\delta}^{2}
 \left(E\,(1 +4\,s)\,(2\,k^2-1)+K\, k'^2\,(1+4\,s-6\,s{k}^{2}) \right) }
{3\,(E- K\, k'^{2})}}
\;\;{\mbox{and}}\;\;r\geq\frac{\delta^2 \pi^2}{K^2},
\end{equation}
where $E=E(k)$
is the complete elliptic integral of the second kind.

In order to investigate stability properties of the CGL solution $U$, we will need the expression giving $U_1$. To find it, we substitute $u_1=U$ and $u_2=\overline{U}$ into (\ref{CGLc}). At first order in $\epsilon$, one finds that the vector
 ${\bf{U}}=\left(U_1,\;\overline{U_1}, \;U'_1,\;{\overline{U_1}'}\right)^T$ satisfies
  a non-homogeneous linear dynamical system. This system is given by
 \begin{equation}
 \label{NHS}
 \dfrac{\rd}{\rd x} {\bf{U}}={\cal A}(0,x)\,{\bf{U}}+{\bf{N}},
\end{equation}
where ${\cal A}(0,x)$ is obtained by setting $\ell=0$ in the matrix given in (\ref{A}), and the non-homogenous term ${\bf{N}}$ is given by
$$
{\bf{N}}=
 \begin{bmatrix}
 0\\0\\\ri\,r\,U_0+\ri\,U_0''-2\,\ri\,s\,U_0^3\\-\ri\,r\,U_0-\ri\,U_0''+2\,\ri\,s\,U_0^3
 \end{bmatrix}.
$$

The homogeneous system corresponding to (\ref{NHS}) is given by
\begin{equation}
 \dfrac{\rd}{\rd x} {\bf{U}}={\cal A}(0,x)\,{\bf{U}}.
 \label{Hom}
 \end{equation}
It turns out that (\ref{Hom}) can be solved explicitly. To do so, one uses two symmetries of CGL (\ref{CGLc}): translation of the variable $x$ and phase change $q_i\rightarrow q_ie^{i\theta}$ for any real $\theta$. The presence of these two symmetries implies the following $T$-periodic solutions to (\ref{Hom}):
\begin{equation}
\label{A0sol}
{\bf{V}}^h_1= \begin{bmatrix}
 U_0'\\U_0'\\U_0''\\U_0''
 \end{bmatrix},\;\;\;
 {\bf{V}}^h_2=
 \begin{bmatrix}
 U_0\\-U_0\\U_0'\\-U_0'
 \end{bmatrix}.
 \end{equation}
 One can then use these two solutions to perform reduction of order on the homogeneous system. The two-dimensional homogeneous linear dynamical system for two unknowns obtained that way turns out to be diagonal and thus can be solved by quadrature. The expressions for the other two solutions ${\bf{V}}^h_3$ and  ${\bf{V}}^h_4$ are rather complicated. They are given in Appendix \ref{app:B}. 
 No linear combination of ${\bf{V}}^h_3$ and  ${\bf{V}}^h_4$ is periodic, implying that (\ref{Hom}) has only a two-dimensional space of $T$-periodic solutions generated by ${\bf{V}}^h_1$ and  ${\bf{V}}^h_2$. Once four linearly independent solutions to \eqref{Hom} are known, one can then use the variation of parameters method to find a
particular solution ${\bf{U}}$ to (\ref{NHS}). Requiring this solution to be $T$-periodic leads to the algebraic relations (\ref{req}) between the constants  $k$, $r$, $\delta$, and $s$.
 The first component of ${\bf U}$ is then given by
\begin{equation}
\label{U1def}
\begin{aligned}
 U_1&=\ri\,\delta k (s+1)\cn(\delta x)\,\left[-\frac{2}{3}\ln \left(\theta_4\left(\frac{\pi\,\delta\,x}{2\,K}\right)\right)\right.\\
 &+\left.\frac{K\,\left(2\dn(\delta\,x)\,\sn(\delta\,x)\,Z(\delta\,x)-Z^2(\delta\,x)\,\cn(\delta\,x)+\cn^3(\delta\,x)\,k ^2\right) }{3\,\cn(\delta\,x)(E+(k ^2-1)\,K)}\right],
 \end{aligned}
 \end{equation}
where $Z$ is the Jacobi zeta function
and $\theta_4$ is a Jacobi theta function.  (For definitions and conventions
for these and other relatives of the elliptic functions, we refer the reader to
\cite{BF}.)

Note that any linear combination of ${\bf{V}}_1^h$
or ${\bf V}_2^h$ can be added to $\mathbf{U}$ without losing the periodicity property.
This latter fact can be explained by the presence of the two symmetries of CGL mentioned above: adding a multiple of the top component of ${\bf{V}}_1^h$ corresponds to adding translation in $x$ to the CGL perturbation, while adding a multiple of
the top component of ${\bf{V}}_2^h$ corresponds to adding a phase change.

Now that we know more about the dependence of $U(x)$ in terms of $\epsilon$, we can study its spectral stability as a solution of CGL

 \subsection{Linearization of CGL}

 We linearize the system of equations (\ref{CGLc}) about the stationary $T$-periodic solution
$u_1(x,t)=U(x)$, $u_2(x,t) = \overline{U}(x)$, where $U$ is given in (\ref{Udef}).  We do this
in a similar way as it is done for NLS in \S \ref{sec1}.
Setting
$u_1=U+\rv_1$, $u_2=\overline{U}+\rv_2$ and discarding all higher-order terms in the
$\rv_i$'s, we obtain
\begin{equation}\label{vLCGL}
\begin{aligned}
\ri \rv_{1t}+\alpha \rv_1 + \rv_{1xx}+ 4 |U|^2\rv_1+ 2U^2\rv_2 &=\epsilon[\ri r \rv_1 + \ri \rv_{1xx}-4 \ri s |U|^2\rv_1-2\ri s U^2 \rv_2], \\
-\ri \rv_{2t}+\alpha \rv_2 + \rv_{1xx}+ 4 |U|^2\rv_2+ 2\overline{U}^2\rv_1 &=\epsilon[-\ri r \rv_2 - \ri \rv_{2xx}+ 4 \ri s |U|^2 \rv_2+ 2 \ri s \overline{U}^2 \rv_1].
\end{aligned}
\end{equation}

Next, by analogy with \S\ref{sec1} we substitute into \eqref{vLCGL} the ansatz
\begin{equation}
\rv_1(x,t) = e^{\ell t} w_1(x), \qquad \rv_2(x,t) = e^{\ell t} w_2(x).
\end{equation}
This results in the following ODE system for $w_1,w_2$
\[
\begin{aligned}
\ri \ell w_{1}+\alpha w_1 + w_{1xx}+ 4 |U|^2w_1+ 2U^2w_2 &=\epsilon[\ri r w_1 + \ri w_{1xx}-4 \ri s |U|^2w_1-2\ri s U^2 w_2] \\
-\ri \ell w_{2}+\alpha w_2 + w_{1xx}+ 4 |U|^2w_2+ 2\overline{U}^2w_1 &=\epsilon[-\ri r w_2 - \ri w_{2xx}+ 4 \ri s |U|^2w_2 + 2 \ri s \overline{U}^2 w_1],
\end{aligned}
\]
which can be rewritten as the eigenvalue problem
\begin{equation}
\mathcal{L}\vomega = \ell \vomega,\;\;\;\vomega:=\left(w_1,w_2\right)^T.\;\;
\label{CGLeigenvalueproblem}
\end{equation}
If we expand $\mathcal{L}=\mathcal{L}_0+\epsilon\mathcal{L}_1+\epsilon^2\mathcal{L}_2+{\mathcal{O}}(\epsilon^3)$, then  $\mathcal{L}_0$ is the operator arising from the linearization of NLS given in (\ref{Ldef}) and
$\mathcal{L}_1$ is given by
$$
\mathcal{L}_1=I_2\frac{\rd^2}{\rd x^2}+ \begin{pmatrix}r-4sU_0^2 & 2 U_0\left( 2\ri U_1-sU_0 \right) \\
2 U_0\left( 2\ri U_1-sU_0 \right) & r-4sU_0^2
\end{pmatrix},\;\;\;I_2=\begin{pmatrix} 1 & 0 \\ 0 & 1 \end{pmatrix}.
$$
Since we are interested in periodic perturbations, we restrict the eigenvalue problem (\ref{CGLeigenvalueproblem}) to the space $\left(L_2(T)\right)^2$ of $T$-periodic square integrable functions.


 \section{Eigenvalues near the origin}
 \label{near}

In this section, we study stability properties of $U(x)$ as a solution to CGL by studying the spectrum of the operator $\mathcal{L}$. We use a perturbative argument to study the eigenvalues of $\mathcal{L}$ near $\ell=0$.
In order to do this, we first need to characterize the eigenvalue $\ell=0$ for the problem (\ref{eigenvalueproblem}) arising from the linearization of NLS about the $T$-periodic solution  (\ref{U0def}). 

\subsection{The case $\epsilon=0$}
\label{eps=0}

 In this case, we show that the algebraic multiplicity of the zero eigenvalue is four by showing that the lowest order nonzero derivative of the Evans function (\ref{EvansDef}) at $\ell=0$ is the fourth derivative. More precisely, we have the following theorem

\begin{The}
\label{the1}
The Evans function $M_0(\ell)$ defined in (\ref{EvansDef})  is such that
$$
M_0(0)=M_0'(0)=M_0''(0)=M_0'''(0),
$$
where the primes denote derivatives with respect to $\ell$.
Furthermore, the fourth derivative, which is given by
\begin{equation}
\label{M04r}
 {M}_0^{(4)}(0)=384\,{\frac { \left( E^2 + K(K-2E)k'^2 \right) ^{2}}{
  k^4\,k'^4\,{\delta}^{8}
}},
\end{equation}
is nonzero for all $k\in (0,1)$.
Thus, the eigenvalue $\ell=0$ is of algebraic multiplicity exactly four.
\end{The}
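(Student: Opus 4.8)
The plan is to compute the Taylor expansion of $M_0(\ell)=\det\!\left(\Phi(T;\ell)-I\right)$ about $\ell=0$ directly, exploiting the fact that at $\ell=0$ a full fundamental system for (\ref{Hom}) is already available. First I would assemble the monodromy matrix $\mathbf{M}(0)=\Phi(T;0)$ from the explicit solutions of \S\ref{PS1} and Appendix \ref{app:B}: the columns $\mathbf{V}_1^h,\mathbf{V}_2^h$ are $T$-periodic, while $\mathbf{V}_3^h,\mathbf{V}_4^h$ carry secular (linear-in-$x$) growth, so that $\mathbf{V}_3^h(x+T)=\mathbf{V}_3^h(x)+c_1\mathbf{V}_1^h(x)$ and $\mathbf{V}_4^h(x+T)=\mathbf{V}_4^h(x)+c_2\mathbf{V}_2^h(x)$ with $c_1,c_2\neq0$ (this is exactly the assertion that no combination of $\mathbf{V}_3^h,\mathbf{V}_4^h$ is periodic). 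Consequently all four Floquet multipliers equal $1$ and $\mathbf{M}(0)$ is conjugate to two $2\times2$ Jordan blocks; equivalently $N:=\mathbf{M}(0)-I$ is nilpotent of rank two. Passing to the ($\ell$-independent, hence determinant-preserving) Jordan basis, $N$ has two identically vanishing rows, so every term in the Leibniz expansion of $\det(\mathbf{M}(\ell)-I)$ carries a factor $\ell^2$. This gives $M_0(0)=M_0'(0)=0$ at once.

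The harder part is to show that the next two Taylor coefficients also cancel. Since $\mathcal{A}(\ell,x)=\mathcal{A}(0,x)+\ell\,\mathcal{A}_1$ depends linearly on $\ell$, with the constant matrix $\mathcal{A}_1=\partial_\ell\mathcal{A}$ supported in its lower-left block, variation of parameters gives the closed forms
\[
\partial_\ell\Phi(T;0)=\Psi(T)\,\mathbf{G},\qquad \mathbf{G}:=\int_0^T\Psi(x)^{-1}\mathcal{A}_1\,\Psi(x)\,\rd x,
\]
with $\Psi(x)=\Phi(x;0)$ the explicit fundamental matrix, together with analogous iterated-integral expressions for $\partial_\ell^2\Phi(T;0)$ and $\partial_\ell^3\Phi(T;0)$. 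Feeding these into the multilinear expansion of the determinant, the coefficients of $\ell^2$ and $\ell^3$ reduce to $2\times2$ minors, in the block that pairs the two eigenvectors against the two generalized eigenvectors, of the matrices built from $\partial_\ell\Phi(T;0)$ and $\partial_\ell^2\Phi(T;0)$. These minors are period integrals of products of the known Jacobi elliptic functions, and I expect them to vanish. The structural reason is that the solvability (Fredholm) conditions for the first generalized eigenfunctions of $\mathcal{L}_0$ are met: each of $\mathbf{V}_1^h,\mathbf{V}_2^h$ admits a periodic Jordan partner under $\mathcal{L}_0$ (arising from differentiating the cnoidal family with respect to its two parameters), so $\ell=0$ has algebraic multiplicity at least four, and $M_0''(0)=M_0'''(0)=0$.

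To finish, I would compute the surviving coefficient $M_0^{(4)}(0)$, which is the determinant of a $2\times2$ array of period integrals built from $\Psi$, $\mathcal{A}_1$, and the secular solutions $\mathbf{V}_3^h,\mathbf{V}_4^h$. These are evaluated by the standard reduction formulas for integrals of products of $\cn,\sn,\dn$ and the Jacobi zeta function $Z$ over one period $[0,4K/\delta]$, and collapse to the closed form (\ref{M04r}). The nonvanishing claim then becomes transparent after rewriting the numerator as a sum of squares,
\[
E^2+K(K-2E)k'^2=\left(E-k'^2K\right)^2+k^2k'^2K^2,
\]
which is strictly positive for every $k\in(0,1)$ since $k,k',K>0$; hence $M_0^{(4)}(0)>0$ and $\ell=0$ has algebraic multiplicity exactly four.

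The main obstacle is the middle step: confirming that the $\ell^2$ and $\ell^3$ coefficients genuinely cancel rather than merely being present, since two $2\times2$ Jordan blocks force only order-two vanishing in general, so the passage to order four is the real content. The cleanest safeguard against error is to organize this vanishing through the Jordan-chain and Fredholm-alternative picture for $\mathcal{L}_0$, invoking Gardner's identification of the order of vanishing of the Evans function with the algebraic multiplicity of the eigenvalue, and to carry out the elliptic-integral bookkeeping that produces the explicit constant in (\ref{M04r}) only at the very last stage.
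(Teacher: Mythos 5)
Your opening and closing steps are sound and essentially coincide with the paper's: the rank-two structure of $\Phi(T;0)-I$ (coming from the periodic pair $\mathbf{V}_1^h,\mathbf{V}_2^h$ and the secularly growing pair $\mathbf{V}_3^h,\mathbf{V}_4^h$) does give $M_0(0)=M_0'(0)=0$, and your sum-of-squares identity $E^2+K(K-2E)k'^2=(E-k'^2K)^2+k^2k'^2K^2$ is correct and is in fact cleaner than the paper's positivity argument via $K>E>k'K$. The genuine gap is the middle step, which you yourself flag as the ``main obstacle'' but then dispose of with ``I expect them to vanish'' plus a structural claim that does not hold as stated. You assert that both kernel elements of $\mathcal{L}_0$ acquire $T$-periodic Jordan partners ``arising from differentiating the cnoidal family with respect to its two parameters.'' But once the period is pinned to $T=4K(k)/\delta$, the real cnoidal family (\ref{U0def}) has only one remaining shape parameter, and differentiating along it can at best produce the partner of the phase mode $\vomega_0^2=(U_0,-U_0)^T$. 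The partner of the translation mode $\vomega_0^1=(U_0',U_0')^T$ would have to come from varying the wave speed, i.e.\ from the Galilean boost, and that symmetry does not preserve $T$-periodicity: its derivative contains the secular term $\ri\,x\,U_0$. So the existence of a periodic partner for $\vomega_0^1$ is not a soft symmetry fact; it is exactly the nontrivial cancellation the paper proves by exhibiting the combinations (\ref{geneig1})--(\ref{geneig2}), in which the secular growth of $\mathbf{w}_1^1$ (the $xU_0$ terms in (\ref{w11w12})) is cancelled against the secular growth of the non-periodic homogeneous solution $\mathbf{V}_4^h$ with the specific coefficient $\tfrac{\ri}{2}E(k'^2K-E)$, and similarly $\mathbf{w}_2^1$ against $\mathbf{V}_3^h$. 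That requires the explicit formulas of Appendix \ref{app:B}; it cannot be read off from the family you name. (A family-based rescue would need the $T$-periodic traveling waves with $c\neq 0$, whose profiles are complex and non-cnoidal --- nontrivial finite-gap input that you never invoke.)

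Once those periodic partners are in hand, the rest of your plan is legitimate: appealing to Gardner's identification of the order of vanishing of $M_0$ with algebraic multiplicity to get $M_0''(0)=M_0'''(0)=0$ is a defensible alternative to the paper's direct route, where the periodicity of $\mathbf{w}_{g,1},\mathbf{w}_{g,2}$ forces linear dependencies among the columns of the matrix in (\ref{M0pp}) and thereby kills the $\ell^2$ and $\ell^3$ coefficients of the determinant expansion. Note, however, that neither this shortcut nor your sketch produces the explicit value (\ref{M04r}); like the paper, you defer that to an elliptic-function computation, which is fine as bookkeeping, but the ``exactly four'' conclusion needs the nonvanishing of that computed coefficient, so it cannot be omitted. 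As it stands, your proposal is an outline whose critical step --- the existence of the two $T$-periodic generalized eigenvectors --- is unproved, and the heuristic offered for it fails for one of the two modes.
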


%

 \begin{proof}
 Let
 ${\rm\mathbf{w}}_i(x;\ell),\;i=1\ldots 4$,  be four linearly independent solutions of (\ref{fourbysys0}) and let
 $W(x;\ell)$ be a fundamental matrix solution of (\ref{fourbysys0}) with columns ${\rm\mathbf{w}}_i$.
 In order to compute the derivative of the Evans function at $\ell=0$, we expand each of the solutions as
 \begin{equation}
 {\rm\mathbf{w}}_i(x;\ell)={\rm\mathbf{w}}_i^0(x)+\ell {\rm\mathbf{w}}_i^1(x)+\ell^2 {\rm\mathbf{w}}_i^2(x)+{\cal{O}}(\ell^3).
 \label{uexp}
 \end{equation}
 The ${\rm\mathbf{w}}_i^0(x)$ are solutions of (\ref{fourbysys0}) for $\ell=0$, so
they are  linear combinations of the ${\rm\mathbf{V}}_i^h$, $i=1\ldots 4$ given in (\ref{A0sol}) and in Appendix \ref{app:B}. We make the choice
 $$
 {\rm\mathbf{w}}_i^0(x)= {\rm\mathbf{V}}_i^h(x).
 $$

Recursively, the higher order terms  ${\rm\mathbf{w}}_i^j$ for $j \ge1$ satisfy the non-homogeneous system
  \begin{equation}
 \label{NHS2}
 \dfrac{\rd}{\rd x} {\rm\mathbf{w}}_i^j={\cal A}(0,x)\,{\rm\mathbf{w}}_i^j+{\bf{M}}_i^{j-1},
\end{equation}
where
$$
{\bf{M}}_i^j=
 \begin{bmatrix}
 0\\0\\
  -\ri\,\left({\rm\mathbf{w}}_i^j\right)_1\\
  \ri\,\left({\rm\mathbf{w}}_i^j\right)_2
 \end{bmatrix},
$$
 and $\left({\rm\mathbf{w}}_i^j\right)_k$ denotes the $k$th component of the vector ${\rm\mathbf{w}}_i^j$.
 Since we have a fundamental set of solutions for the homogeneous system, we can solve \eqref{NHS2} by variation of parameters method. For the higher order terms ${\rm\mathbf{w}}_i^j$ we choose
 $$
 {\rm\mathbf{w}}^{j}_i(x)=W_0(x)\,\int_0^xW_0^{-1}(z)\,{\bf{M}}_i^j(z)\rd z,\;\;\;i=1\ldots4,\;\;j\ge 1,
 $$
 where $W_0(x)\equiv W(x;0)$ is the fundamental matrix solution for the system (\ref{fourbysys0}) at $\ell=0$ with columns ${\rm\mathbf{V}}_i^h$. The formulas for $ {\rm\mathbf{w}}^{1}_1$ and $ {\rm\mathbf{w}}^{1}_2$ can be written in a compact way as
\begin{equation}
\label{w11w12}
{\rm\mathbf{w}}^1_1=
\frac{\ri}{2}\left(
 \begin{bmatrix}
 -x\,U_0\\\,x\,U_0\\
 -\,\left(x\,U_0\right)'\\
  \,\left(x\,U_0\right)'
 \end{bmatrix}+{\rm\mathbf{V}}_4^{h}\right),\;\;\;
 {\rm\mathbf{w}}^1_2=
\frac{\ri}{2\delta^2(1-2k^2)}
  \left(\begin{bmatrix}
 \left(x\,U_0\right)'\\
 \left(x\,U_0\right)'\\
 \left(x\,U_0\right)''\\
 \left(x\,U_0\right)''
 \end{bmatrix} +{\rm\mathbf{V}}_3^{h}
 \right),
 \end{equation}
 where $U_0$ is the solution of NLS  given in (\ref{U0def}).  The expressions
 for ${\rm\mathbf{w}}_3^1$, ${\rm\mathbf{w}}_4^1$, ${\rm\mathbf{w}}_1^2$, and ${\rm\mathbf{w}}_2^2$ are given
 in Appendix \ref{app:B}.

 In order to compute the Evans function (\ref{EvansDef}), we need  solutions
 ${\mathbf{\vphi}}_i(x;\ell),\;i=1\ldots4$  of (\ref{fourbysys0}) with initial condition ${\mathbf{\vphi}}_i(0,\ell)={\bf{e}}_i$, where ${\bf{e}}_i$ are the standard basis vectors of $\mathbb{R}^4$. The
 ${\mathbf{\vphi}}_i(x;\ell)$ are thus the columns of the matrix $\Phi(x;\ell)$ used in the definition of the Evans function (\ref{EvansDef1}). Matching
 the latter initial conditions with those of the ${\rm\mathbf{w}}_i$, one finds that
  \begin{eqnarray*}
 {{\vphi}}_1&=&\frac{1}{2\delta k}\,\left( {\rm\mathbf{w}}_3+{\rm\mathbf{w}}_2 \right),\\
 {\mathbf{\vphi}}_2&=&\frac{1}{2\delta k}\,\left( {\rm\mathbf{w}}_3-{\rm\mathbf{w}}_2 \right),\\
 {\mathbf{\vphi}}_3&=&\frac{1}{2\delta^3k}\,\left(\delta^2 {\rm\mathbf{w}}_4-{\rm\mathbf{w}}_1 \right),\\
 {\mathbf{\vphi}}_4&=&\frac{-1}{2\delta^3k}\,\left(\delta^2 {\rm\mathbf{w}}_4+{\rm\mathbf{w}}_1 \right).
 \end{eqnarray*}
 The Evans function (\ref{EvansDef}) can then be expressed as
 \begin{equation}
 \label{rel}
 M_0(\ell)=\frac{1}{4\delta^6k^4}\widetilde{M}_0(\ell)
 \end{equation}
 where
 \begin{equation}
 \widetilde{M}_0(\ell)={\mbox{det}}\left(W(T;\ell)-A\right),\;\;A=
\begin{bmatrix}
0&\delta k&\delta k&0\\
0                &-\delta  k    &\delta k &0\\
-\delta^3 k&0                 &0           &\delta k\\
\delta^3k &0                 & 0          &\delta k
\end{bmatrix}
 \end{equation}
 and $W(x;\ell)$ is the fundamental matrix solution of (\ref{fourbysys0}) with columns ${\rm\mathbf{w}}_i,\;i=1\ldots4$. Note that $W(0;\ell)=A$.

 Since  $\widetilde{M}_0(\ell)$ differs from ${M}_0(\ell)$ by a factor which is constant in $\ell$, the derivatives of $M_0$ at $\ell=0$ are constant multiples of those of  $\widetilde{M}_0$. The derivatives of  $\widetilde{M}_0$ at $\ell=0$ can be computed using the expansion  (\ref{uexp}).
 Because  ${\rm\mathbf{V}}^h_i, i=1,2$ are $T$-periodic, the first two columns of $W(T;\ell)-A$ are zero at $\ell=0$. Thus, the expansion of the determinant $\widetilde{M}_0$ in powers of $\ell$ has no terms of lower order than $\ell^2$. Furthermore, the coefficient of $\ell^2$ is given by
 \begin{equation}
 \label{M0pp}
 \frac{1}{2}\widetilde{M}_0''(0)=
{\mbox{det}}\left({\rm\mathbf{w}}^1_1(T),{\rm\mathbf{w}}^1_2(T),{\rm\mathbf{V}}^h_3(T)-A_3,{\rm\mathbf{V}}^h_4(T)-A_4\right),
 \end{equation}
 where $A_i$ denotes the $i$th column of $A$.

 The right-hand-side of (\ref{M0pp}) is easily shown to be zero, as follows.
 Using the expressions for ${\rm\mathbf{w}}_{1}^1$ and ${\rm\mathbf{V}}^{h}_4$ given in (\ref{w11w12}) and in Appendix \ref{app:B} we find that the vector
 \begin{equation}
\label{geneig1}
{\rm\mathbf{w}}_{g,1}\equiv{\rm\mathbf{w}}_{1}^1+\frac{\ri}{2}E\left(k'^2K-E\right){\rm\mathbf{V}}^{h}_4
 \end{equation}
is a $T$-periodic function of $x$. Taking the difference of the values of the vector ${\rm\mathbf{w}}_{g,1}$ at $x=0$ and $x=T$ gives a linear combination of the first and fourth columns of the matrix in (\ref{M0pp}) which must be zero.  Similarly, the vector
 \begin{equation}
\label{geneig2}
{\rm\mathbf{w}}_{g,2}\equiv{\rm\mathbf{w}}_{2}^1-\frac{\ri}{2\delta^2}E\left(k'^2K+E(1-2k^2)\right){\rm\mathbf{V}}^{h}_3,
 \end{equation}
 is periodic and thus a linear combination of the second and third columns of the matrix in (\ref{M0pp}) is zero.

 These facts also imply that $\widetilde{M}_0'''(0)=0$, as follows. The idea is to write $\widetilde{M}_0(\ell)$ as
 \begin{equation}
 \label{M0exp}
 \frac{\widetilde{M}_0(\ell)}{\ell^2}={\mbox{det}}\left({\rm\mathbf{w}}^1_1(T)+\mathcal{O}(\ell),\;
 {\rm\mathbf{w}}^1_2(T)+\mathcal{O}(\ell),\;
 {\rm\mathbf{V}}^h_3(T)-A_3+\mathcal{O}(\ell),\;
 {\rm\mathbf{V}}^h_4(T)-A_4+\mathcal{O}(\ell)\right).
 \end{equation}
 Since the $\ell^1$ term of the right-hand-side of (\ref{M0exp}) will be a sum of determinants of matrices each involving three of the four columns of the matrix in (\ref{M0pp}), it must be zero. Thus, we have that the Evans function for $\epsilon=0$ at least has a fourth order zero at $\ell=0$.

 Then,
  \begin{equation}
  \begin{aligned}
 \label{M0expc}
 \frac{\widetilde{M}_0(\ell)}{\ell^2}&=&{\mbox{det}}\left[\right.&\left.{\rm\mathbf{w}}^1_1(T)+{\rm\mathbf{w}}^2_1(T)\ell+\mathcal{O}(\ell^2),\;
 {\rm\mathbf{w}}^1_2(T)+{\rm\mathbf{w}}^2_2(T)\ell+\mathcal{O}(\ell^2),\right.\\
&&&\left.{\rm\mathbf{V}}^h_3(T)-A_3+{\rm\mathbf{w}}^2_3(T)\ell+\mathcal{O}(\ell^2),\;
 {\rm\mathbf{V}}^h_4(T)-A_4+{\rm\mathbf{w}}^2_4(T)\ell+\mathcal{O}(\ell^2)\right]\\
 &=&\ell^2{\mbox{det}}\left(\right.&\left.{\rm\mathbf{w}}^2_1(T),{\rm\mathbf{w}}^2_2(T),{\rm\mathbf{V}}^h_3(T)-A_3,{\rm\mathbf{V}}^h_4(T)-A_4\right)\\
 &+&\ell^2{\mbox{det}}\left(\right.&\left.{\rm\mathbf{w}}^2_1(T),{\rm\mathbf{w}}^1_2(T),{\rm\mathbf{w}}^1_3(T),{\rm\mathbf{V}}^h_4(T)-A_4\right)\\
  &+&\ell^2{\mbox{det}}\left(\right.&\left.{\rm\mathbf{w}}^1_1(T),{\rm\mathbf{w}}^2_2(T),{\rm\mathbf{V}}^h_3(T)-A_3,{\rm\mathbf{w}}^1_4(T)\right)\\
   &+&\ell^2{\mbox{det}}\left(\right.&\left.{\rm\mathbf{w}}^1_1(T),{\rm\mathbf{w}}^1_2(T),{\rm\mathbf{w}}^1_3(T),{\rm\mathbf{w}}^1_4(T)\right) +\mathcal{O}(\ell^3).
 \end{aligned}
 \end{equation}
 It is then straightforward to use (\ref{M0expc}), and the expressions for ${\rm\mathbf{w}}^1_i$, ${\rm\mathbf{w}}^2_i$, and ${\rm\mathbf{V}}^h_i$ in (\ref{w11w12}) and in Appendix \ref{app:B}, to find
 \begin{equation}
\label{M04}
 \widetilde{M}_0^{(4)}(0)=1536\,{\frac { \left( E^2 + K(K-2E)k'^2 \right) ^{2}}{
  {\delta}^{2} k'^4}}.
\end{equation}
 Then one obtains  (\ref{M04r}) using the relation (\ref{rel}).  Using the inequalities
$K > E > k' K$, which hold for all moduli $k \in (0,1)$, we find that
$$E^2 + K(K-2E)k'^2 > 2 k'^2 K (K-E) > 0,$$
and thus the fourth derivative of the Evans function is nonzero for all such $k$.
\end{proof}

%

We have established in \S \ref{PS1} that for for $\ell=0$ (\ref{fourbysys0}) only has a two-dimensional space of $T$-periodic solutions generated by ${\rm\mathbf{V}}_1^h$ and ${\rm\mathbf{V}}_2^h$. The eigenvalue problem
  (\ref{eigenvalueproblem}) thus admits a two-dimensional eigenvector space for $\ell=0$. However, since the zero of the Evans function is of multiplicity four, the eigenvalue $\ell=0$ actually is of algebraic multiplicity four. Below we describe two eigenvectors and two generalized eigenvectors corresponding to $\ell=0$.

  We denote the two eigenvectors $
{\bf{\vomega}}_0^1,\;{\bf{\vomega}}_0^2\in {\mbox{Ker}}(\mathcal{L}_0)
$. They are given by
\begin{equation}
\label{psi012def}
{\bf{\vomega}}_0^1= \begin{bmatrix}
 U_0'\\U_0'
 \end{bmatrix},\;\;\;\;
{\bf{\vomega}}_0^2=
 \begin{bmatrix}
 U_0\\-U_0
 \end{bmatrix},
  \end{equation}
where $U_0$ is the NLS solution (\ref{U0def}).
Furthermore, there are two generalized eigenvectors
$
{\bf{\vomega}}^{g,1}_0,\;{\bf{\vomega}}^{g,2}_0$ satisfying
\begin{equation}
\label{genpro}
\mathcal{L}_0\,{\bf{\vomega}}^{g,i}_0={\bf{\vomega}}_0^{i},\;\;\;i=1,2.
\end{equation}
The two components of the generalized eigenvector ${\bf{\vomega}}^{g,1}_0$ (resp. ${\bf{\vomega}}^{g,2}_0$) are the first two components of ${\rm\mathbf{w}}_{g,1}$ (resp.  ${\rm\mathbf{w}}_{g,2}$) given in (\ref{geneig1}) (resp. (\ref{geneig2})).

\subsection{The case $\epsilon\neq0$}
When $\epsilon$ is not zero, the algebraic multiplicity of the eigenvalue $\ell=0$ is only two and two eigenvalues move away from the origin. In order to track them, we use an argument based on the Fredholm alternative.

The linearization of CGL about the $T$-periodic  solution (\ref{Udef}) gives rise to the eigenvalue problem (\ref{CGLeigenvalueproblem})
defined over the space $\left(L_2(T)\right)^2$ of $T$-periodic square integrable vector valued functions.
The linear operator ${\mathcal{L}}$ has two eigenvectors
${\bf{\vomega}}^1,\;{\bf{\vomega}}^2\in {\mbox{Ker}}(\mathcal{L})$. They are given by
\begin{equation}
\label{CGLev1}
{\bf{\vomega}}^1= \begin{bmatrix}
 U'\\U'
 \end{bmatrix},\;\;
{\bf{\vomega}}^2=
 \begin{bmatrix}
 U\\-U
 \end{bmatrix},
 \end{equation}
 where $U$ is the solution of CGL given in (\ref{Udef}).
 The eigenvectors (\ref{CGLev1}) can be expanded in $\epsilon$ as
 \begin{equation}
\begin{aligned}
\label{CGLev}
{\bf{\vomega}}^1=
{\bf{\vomega}}_0^1+\epsilon{\bf{\vomega}}_1^1+\epsilon^2{\bf{\vomega}}_2^1+\mathcal{O}(\epsilon^3),\\
{\bf{\vomega}}^2=
{\bf{\vomega}}_0^2+\epsilon{\bf{\vomega}}_1^2+\epsilon^2{\bf{\vomega}}_2^2+\mathcal{O}(\epsilon^3),
 \end{aligned}
 \end{equation}
where ${\bf{\vomega}}_0^i$ are the two vector-valued functions of ${\mbox{Ker}}(\mathcal{L}_0)$ given in (\ref{psi012def}), and ${\bf{\vomega}}_1^i$ are given by
\begin{equation}
\label{psi112def}
{\bf{\vomega}}_1^1= \begin{bmatrix}
 U_1'\\U_1'
 \end{bmatrix},\;\;\;\;
{\bf{\vomega}}_1^2=
 \begin{bmatrix}
 U_1\\-U_1
 \end{bmatrix},
  \end{equation}
 where $U_1$ is the coefficient of $\epsilon$ in the solution of CGL and is given in  (\ref{U1def}). Furthermore, the vector-valued functions ${\bf{\vomega}}_0^i$, ${\bf{\vomega}}_1^i$, and ${\bf{\vomega}}_2^i$ in (\ref{CGLev}) satisfy the differential equations
 \begin{equation}
 \label{nicerel}
\begin{aligned}
&\mathcal{L}_0{\bf{\vomega}}_1^i+\mathcal{L}_1{\bf{\vomega}}_0^i=0,\\
&\mathcal{L}_1{\bf{\vomega}}_1^i+\mathcal{L}_2{\bf{\vomega}}_0^i+\mathcal{L}_0{\bf{\vomega}}_2^i=0,\;\;i=1,2.
\end{aligned}
\end{equation}
These equations are obtained by inserting the right-hand-sides of (\ref{CGLev}) into (\ref{CGLeigenvalueproblem}) with $\ell=0$ and then writing the relations occurring at the first and second order in $\epsilon$.

In order to track the eigenvalues of $\mathcal{L}$ emerging from the origin, we suppose that ${\bf{\vomega}}_{\epsilon}$ is an eigenvector corresponding the a nonzero eigenvalue $\ell_{\epsilon}$. We expand these in $\epsilon$:
\begin{equation}
\label{eigLe}
\begin{aligned}
&\ell_{\epsilon}=\epsilon \ell_1+\epsilon^2\ell_2+\mathcal{O}(\epsilon^3),\\
&{\bf{\vomega}}_{\epsilon}={\bf{\vomega}}_{0}+\epsilon{\bf{\vomega}}_{1}+\epsilon^2{\bf{\vomega}}_{2}+\mathcal{O}(\epsilon^3).
\end{aligned}
\end{equation}
The vector ${\bf{\vomega}}_{0}$ is in the kernel of $\mathcal{L}_0$. Thus ${\bf{\vomega}}_{0}$ will be some linear combination of ${\bf{\vomega}}^{1}_0$ and ${\bf{\vomega}}^{2}_0$:
\begin{equation}
\label{lincomb}
{\bf{\vomega}}_{0}=\alpha{\bf{\vomega}}^{1}_0+\beta{\bf{\vomega}}^{2}_0,
\end{equation}
where $\alpha$ and $\beta$ are complex constants. We insert the expressions (\ref{eigLe}) into the
eigenvalue problem (\ref{CGLeigenvalueproblem}) for $\mathcal{L}$. At first order in $\epsilon$, ${\bf{\vomega}}_{1}$ satisfies the following equation:
\begin{equation}
\mathcal{L}_0{\bf{\vomega}}_{1}=\ell_1{\bf{\vomega}}_{0}-\mathcal{L}_1{\bf{\vomega}}_{0}.
\label{psi1}
\end{equation}
Using (\ref{genpro}),  the first equation in (\ref{nicerel})
and linear superposition, we obtain the following solution to  (\ref{psi1})
\begin{equation}
\label{om1}
{\bf{\vomega}}_{1}=\alpha{\bf{\vomega}}_{1}^1+\beta{\bf{\vomega}}_1^{2}+\ell_1\left(\alpha{\bf{\vomega}}^{g,1}_0+\beta{\bf{\vomega}}^{g,2}_0\right),
\end{equation}
where $\alpha$ and $\beta$ are the constants in (\ref{lincomb}), ${\bf{\vomega}}_{1}^i$ are given in (\ref{psi112def}), and ${\bf{\vomega}}^{g,i}_0$ are the generalized eigenvectors of $\mathcal{L}_0$ satisfying the relations (\ref{genpro}). At second order, one finds  the following equation for ${\bf{\vomega}}_{2}$:
\begin{equation}
\label{psi2}
\mathcal{L}_0{\bf{\vomega}}_{2}=\ell_1{\bf{\vomega}}_{1}+\ell_2{\bf{\vomega}}_{0}-\mathcal{L}_1{\bf{\vomega}}_{1}
-\mathcal{L}_2{\bf{\vomega}}_{0}.
\end{equation}
We can eliminate the $\mathcal{L}_2$ term from (\ref{psi2}). Indeed, if we use the second part of (\ref{nicerel}) and the expression for ${\bf{\vomega}}_{1}$ given in (\ref{om1}), one finds that
\begin{equation}
\mathcal{L}_1{\bf{\vomega}}_{1}
+\mathcal{L}_2{\bf{\vomega}}_{0}=
\ell_1\mathcal{L}_1 \left(\alpha{\bf{\vomega}}^{g,1}_0+\beta{\bf{\vomega}}^{g,2}_0\right)
-\mathcal{L}_0 \left(\alpha{\bf{\vomega}}^{1}_2+\beta{\bf{\vomega}}^{2}_2\right).
\end{equation}
Equation (\ref{psi2}) then becomes
\begin{equation}
\label{psi22}
\mathcal{L}_0{\bf{\vomega}}_{2}=\ell_1{\bf{\vomega}}_{1}+\ell_2{\bf{\vomega}}_{0}-\ell_1\mathcal{L}_1 \left(\alpha{\bf{\vomega}}^{g,1}_0+\beta{\bf{\vomega}}^{g,2}_0\right)+
\mathcal{L}_0 \left(\alpha{\bf{\vomega}}^{1}_2+\beta{\bf{\vomega}}^{2}_2\right).
\end{equation}

We will not solve (\ref{psi22}) but rather find a solvability condition that guarantees $T$-periodic solutions. First, we define the usual inner product
\begin{equation}
\label{scalpro}
\left<{\bf{X}}_1,{\bf{X}}_2\right>\equiv \int_0^T {\bf{X}}_1^T(x) \overline{{\bf{X}}}_2(x) \,\rd x,\;\;{\bf{X}}_i\in \left(L_2(T)\right)^2.
\end{equation}
With respect to (\ref{scalpro}), the adjoint of $\mathcal{L}_0$ from (\ref{Ldef}) is given by
\begin{equation}
\label{Ldagdef}
{\mathcal L}^{\dagger}=-\left[\frac{\rd^2}{\rd x^2}+ \begin{pmatrix}4U_0^2+\alpha & 2 U_0^2 \\
2 U_0^2 & 4U_0^2+\alpha
\end{pmatrix}\right]J,\;\;\;J=\begin{pmatrix} \ri & 0 \\ 0 & -\ri \end{pmatrix}.
\end{equation}
It is then easy to verify that the kernel of  $\mathcal{L}_0^{\dagger}$ is generated by $J{\bf{\vomega}}^{i}_0,i=1,2$, where ${\bf{\vomega}}^{i}_0$ are the generators of the kernel of  $\mathcal{L}_0$.

We can now obtain compatibility conditions for (\ref{psi22}) by taking the scalar product with elements of the kernel of  $\mathcal{L}_0^{\dagger}$:
\begin{equation}
\label{dotpsi2}
\begin{aligned}
\left<\mathcal{L}_0{\bf{\vomega}}_{2},\;J{\bf{\vomega}}^{i}_0\right>&=
\ell_1\left<{\bf{\vomega}}_{1},\;J{\bf{\vomega}}^{i}_0\right>+\ell_2\left<{\bf{\vomega}}_{0},\;J{\bf{\vomega}}^{i}_0\right>-
\ell_1\left<\mathcal{L}_1 \left(\alpha{\bf{\vomega}}^{g,1}_0+\beta{\bf{\vomega}}^{g,2}_0\right),\;J{\bf{\vomega}}^{i}_0\right>\\
&+\left<\mathcal{L}_0 \left(\alpha{\bf{\vomega}}^{1}_2+\beta{\bf{\vomega}}^{2}_2\right),\;J{\bf{\vomega}}^{i}_0\right>,\;\;i=1,2.
\end{aligned}
\end{equation}
Because $J{\bf{\vomega}}^{i}$ is in the kernel of $\mathcal{L}_0^{\dagger}$, the left-hand side and the last term of the right-hand side of (\ref{dotpsi2}) are zero. We can then rewrite (\ref{dotpsi2}) using the expression for ${\bf{\vomega}}_{1}$ given in (\ref{om1})
\begin{equation}
\label{dotpsi3}
\begin{aligned}
0&=
\ell_1^2
\left<\alpha{\bf{\vomega}}^{g,1}_0+\beta{\bf{\vomega}}^{g,2}_0,\;J{\bf{\vomega}}^{i}_0\right>+
\ell_1\left(\left<\alpha{\bf{\vomega}}_{1}^1+\beta{\bf{\vomega}}_1^{2},\;J{\bf{\vomega}}^{i}_0\right>-
\left<\mathcal{L}_1 \left(\alpha{\bf{\vomega}}^{g,1}_0+\beta{\bf{\vomega}}^{g,2}_0\right),\;J{\bf{\vomega}}^{i}_0\right>\right)\\
&+
\ell_2\left<\alpha{\bf{\vomega}}^{1}_0+\beta{\bf{\vomega}}^{2}_0,\;J{\bf{\vomega}}^{i}_0\right>,\;\;i=1,2.
\end{aligned}
\end{equation}

This is a homogeneous linear system of equations for $\alpha$ and $\beta$ and we are looking for the values of $\ell_i$ for which \eqref{dotpsi3} has a non-trivial solution.
The system can be simplified further by making the observation that the components of the vectors  ${\bf{\vomega}}^{i}_0$, ${\bf{\vomega}}^{i}_1$, and ${\bf{\vomega}}^{g,1}_0$ are even functions of $x$ with respect to the axis $x=T/2$ for $i=1$ and odd for $i=2$.  Thus, any inner product in (\ref{dotpsi3}) involving two different superscripts will be zero. Furthermore, because $J$ is skew-adjoint, the last term of (\ref{dotpsi3}) is always zero.  We reduce to the two equations
\begin{equation}
\label{dotpsi4}
\begin{aligned}
0&=
\alpha\ell_1\left(\ell_1
\left<{\bf{\vomega}}^{g,1}_0,\;J{\bf{\vomega}}^{1}_0\right>+
\left<{\bf{\vomega}}_{1}^1,\;J{\bf{\vomega}}^{1}_0\right>-
\left<\mathcal{L}_1 {\bf{\vomega}}^{g,1}_0,\;J{\bf{\vomega}}^{1}_0\right>\right),\\
0&=
\beta\ell_1\left(\ell_1
\left<{\bf{\vomega}}^{g,2}_0,\;J{\bf{\vomega}}^{2}_0\right>+
\left<{\bf{\vomega}}_{1}^2,\;J{\bf{\vomega}}^{2}_0\right>-
\left<\mathcal{L}_1 {\bf{\vomega}}^{g,2}_0,\;J{\bf{\vomega}}^{2}_0\right>\right).\\
\end{aligned}
\end{equation}
The system (\ref{dotpsi4}) has a non-trivial solution $(\alpha,\beta)$ if $\ell_1=0$, which corresponds to the generators of the kernel of $\mathcal{L}$ given in (\ref{CGLev1}), or if $\ell_1$ is given by either of the expressions
\begin{equation}
\label{ell1}
\begin{aligned}
\ell_1=\frac{\left<\mathcal{L}_1 {\bf{\vomega}}^{g,1}_0,\;J{\bf{\vomega}}^{1}_0\right>-\left<{\bf{\vomega}}_{1}^1,\;J{\bf{\vomega}}^{1}_0\right>
}{\left<{\bf{\vomega}}^{g,1}_0,\;J{\bf{\vomega}}^{1}_0\right>}\;\;{\mbox{or}}\;\;
\ell_1=\frac{\left<\mathcal{L}_1 {\bf{\vomega}}^{g,2}_0,\;J{\bf{\vomega}}^{2}_0\right>-\left<{\bf{\vomega}}_{2}^1,\;J{\bf{\vomega}}^{2}_0\right>
}{\left<{\bf{\vomega}}^{g,2}_0,\;J{\bf{\vomega}}^{2}_0\right>}.\end{aligned}
\end{equation}

The components of the vectors ${\bf{\vomega}}^{i}_0$ and ${\bf{\vomega}}^{i}_1$ are real and  those of ${\bf{\vomega}}^{g,i}_0$ are purely imaginary making the expressions in (\ref{ell1}) all real. Thus (\ref{CGLeigenvalueproblem}) has an eigenvalue on the right side of the complex plane for small positive values of $\epsilon$ if one of the two solutions in (\ref{ell1}) is positive.  This gives an instability condition for the solution (\ref{Udef}).

Note that in what follows, we make the simplifying assumption that $\delta=1$. This is done without loss of generality because CGL in the form (\ref{CGL}) has the symmetry that if $q_2(x,t)=\overline{q_1(x,t)}=f(x,t)$ is a solution for a given value of the coefficient $r=r_0$, then $q_2(x,t)=\overline{q_1(x,t)}=\gamma f(\gamma\,x,\gamma^2 t)$ is a solution corresponding to the value of the coefficient $r=r_0/\gamma^2$. This symmetry can thus be used to set $\delta$ to 1.

The condition that the first expression in (\ref{ell1}) be positive gives rise to the inequality
\begin{equation}
\label{condi1}
\begin{aligned}
&4k'^2 ((214k^4 -223 k^2-15)s+4k^4+197 k^2-225 )K^4-420(1+s)E^4\\
&+8((214 k^6-657 k^4 +428 k^2+39 )s+4 k^6+78 k^4-412 k^2+354) E K^3\\
&-168 (4 k^4+16 s k^2 k'^2+4s+19k'^2) E^2 K^2+840(s+2k'^2) E^3 K
\\
&-210\, \left( 1+s \right){\it K}\, \left( (1+k^2){{\it E}}
^{2}+(1+7k^2+k^4){{\it K}}^{2}
-2\,(1+4\,k^2+k^4)E\,K
\right)  {\it I_1}
\\
&-315\,{k}^{4} \left( 1+s \right){{\it K}}^{2}
 \left( {\it E}-5\,{\it K}-6\,{k}^{2}{\it K} \right) {\it I_2
} -1575\,{k}^{6} \left( 1+s \right){{\it K}}^{3}
 {\it I_3}> 0.
 \end{aligned}
\end{equation}
where
$$
I_1\equiv\int_{0}^{4K}Z^2(x)\rd x,\;\;I_2\equiv\int_{0}^{4K}Z^2(x)\sn^4(x)\rd x,\;\;I_3\equiv\int_{0}^{4K}Z^2(x)\sn^6(x)\rd x.
$$
Furthermore, we obtain the following inequality by requiring the second expression to be positive
\begin{equation}
\label{condi2}
\begin{aligned}
&4k'^2 ((598 k^4-1373  k^2+799)s -32 k^4-113 k^2+169) K^4-420(1+s) E^4\\
&8 ((598   k^6-2108 k^4 +2624  k^2-1138) s-32 k^6+97 k^4+104 k^2-193) E K^3\\
&+56 ( (173 k^4-308 k^2+143)s+7k^2 k'^2+8)E^2 K^2
+840(6 s k^2-2 s+1) E^3 K
\\&
+210\, \left( 1+s \right){\it K}\, \left( (1-8\,{k}^{2}
-2\,{k}^{4}){{\it K}}^{2}+(4\,{k}^{4}+7\,{k}^{2}-2)E\,{\it K}+(1+{k}^{2}){{\it E}}^
{2} \right)  {\it I_1}
\\&
+315\,\left( 1+s \right){k}^{4}{{\it K}}^{2}   \left( 8\,{k}^{2}{\it
K}+{\it E}+3\,{\it K} \right) {\it I_2}
-1575\, \left( 1+s \right){k}^{6}{{\it K}}^{3} {\it
I_3}>0.
\end{aligned}
\end{equation}

\begin{The}\label{newthm}
When either condition (\ref{condi1}) or (\ref{condi2}) is met, one eigenvalue emerges from the origin on the right side of the complex $\ell$-space for small positive $\epsilon$. In other words, conditions (\ref{condi1}) and (\ref{condi2}) are spectral instability criteria for the solution of CGL given in (\ref{Udef}).
\end{The}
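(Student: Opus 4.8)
The plan is to obtain Theorem~\ref{newthm} as the conclusion of the Lyapunov--Schmidt/Fredholm perturbation set up above, applied to the two eigenvalues that detach from the quadruple zero eigenvalue of $\mathcal{L}_0$. By Theorem~\ref{the1} the eigenvalue $\ell=0$ has algebraic multiplicity four, while \S\ref{eps=0} shows its geometric multiplicity is two, with eigenvectors $\vomega_0^1,\vomega_0^2\in\ker\mathcal{L}_0$ and generalized eigenvectors $\vomega_0^{g,1},\vomega_0^{g,2}$ obeying~(\ref{genpro}). When $\epsilon$ is turned on, two of the four eigenvalues remain pinned at the origin---precisely the symmetry-generated vectors $\vomega^1,\vomega^2\in\ker\mathcal{L}$ of~(\ref{CGLev1})---and I would track the remaining two through the joint expansions~(\ref{eigLe}), with leading term $\vomega_0=\alpha\vomega_0^1+\beta\vomega_0^2$ as in~(\ref{lincomb}).

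The heart of the argument is the solvability (Fredholm) condition for~(\ref{psi22}). Since $\ker\mathcal{L}_0^{\dagger}$ is spanned by $J\vomega_0^1,J\vomega_0^2$, pairing~(\ref{psi22}) against these annihilates both the left-hand side and the $\mathcal{L}_0$-range term on the right; using the even/odd parity of the components about $x=T/2$ to decouple the $i=1$ and $i=2$ relations then collapses everything to the diagonal $2\times 2$ system~(\ref{dotpsi4}). A nontrivial pair $(\alpha,\beta)$ forces either $\ell_1=0$ (the persisting kernel) or one of the two quotients in~(\ref{ell1}); these two quotients are the leading-order corrections $\ell_1$ to the detaching eigenvalues.

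I would then verify that both quotients in~(\ref{ell1}) are real. A short check of parities suffices: $U_0$, hence each $\vomega_0^i$, is real; $U_1$ in~(\ref{U1def}) is purely imaginary, so the off-diagonal entry $2U_0(2\ri U_1-sU_0)$ of $\mathcal{L}_1$ is real and $\mathcal{L}_1$ has real coefficients; and the $\vomega_0^{g,i}$ are purely imaginary. Together with $J=\mathrm{diag}(\ri,-\ri)$ this makes every inner product in~(\ref{ell1}) real. Since $\ell_\epsilon=\epsilon\ell_1+\mathcal{O}(\epsilon^2)$ with $\ell_1\in\R$, we get $\operatorname{Re}\ell_\epsilon=\epsilon\ell_1+\mathcal{O}(\epsilon^2)$, which is strictly positive for all sufficiently small $\epsilon>0$ exactly when $\ell_1>0$. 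Thus each requirement $\ell_1>0$ is by itself a spectral instability criterion, and the theorem reduces to making these two positivity conditions explicit.

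That explicit evaluation is the main obstacle and the bulk of the labor. Here I would substitute $U_0$ from~(\ref{U0def}), $U_1$ from~(\ref{U1def}), the top components of the generalized eigenvectors~(\ref{geneig1})--(\ref{geneig2}), and $\mathcal{L}_1$ into the inner products of~(\ref{ell1}), and integrate over one period $T=4K/\delta$. Most contributions reduce to standard elliptic integrals expressible through $K$, $E$, $k$, and $k'$, but the $Z^2$-weighted terms such as $\cn^2 Z^2$, $\sn^4 Z^2$, and $\sn^6 Z^2$ are irreducible and produce exactly the three integrals $I_1,I_2,I_3$. Setting $\delta=1$ by the scaling symmetry of~(\ref{CGL}) noted before the theorem, the condition $\ell_1>0$ for the first quotient becomes precisely~(\ref{condi1}) and for the second precisely~(\ref{condi2}). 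The delicate work is the elliptic-integral bookkeeping---integrating by parts against $Z'=\dn^2-E/K$ and reducing the $\sn^{2n}Z^2$ moments---after which the positivity statements, and hence Theorem~\ref{newthm}, follow at once from the preceding paragraph.
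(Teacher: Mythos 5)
Your proposal is correct and follows essentially the same route as the paper: the identical Lyapunov--Schmidt/Fredholm expansion pairing (\ref{psi22}) against $J\vomega_0^1,J\vomega_0^2$, the same parity-based reduction to the decoupled system (\ref{dotpsi4}) and the quotients (\ref{ell1}), the same reality argument making $\ell_1$ real, and the same elliptic-integral evaluation (with $\delta=1$) turning the positivity of the two quotients into (\ref{condi1}) and (\ref{condi2}). As a minor aside, your version of the reality check---$U_1$ purely imaginary, hence $\mathcal{L}_1$ real and $\vomega_1^i$ purely imaginary---is actually more accurate than the paper's statement that the components of $\vomega_1^i$ are real.
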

  
Figure \ref{fig2} shows the regions in the $k$-$s$ space in which the inequalities for existence (\ref{req}) and instability (\ref{condi1}) and (\ref{condi2}) are satisfied. Depending on the values of $k$ and $s$, there are two possible types of behaviors for the two emerging eigenvalues (see Figure \ref{fig7}). In the first case, the two eigenvalues move to the left on the real line making our analysis inconclusive since other eigenvalues could possibly emerge from zeroes of the Evans function that are elsewhere on the imaginary axis. In the second case,  the eigenvalues move on opposite directions on the real line causing the solution (\ref{Udef}) to be spectrally unstable.

\newlength{\fred}
\setlength{\fred}{.333\textwidth}
\begin{figure}
\centering
\begin{minipage}{\fred}
\includegraphics[width=\fred]{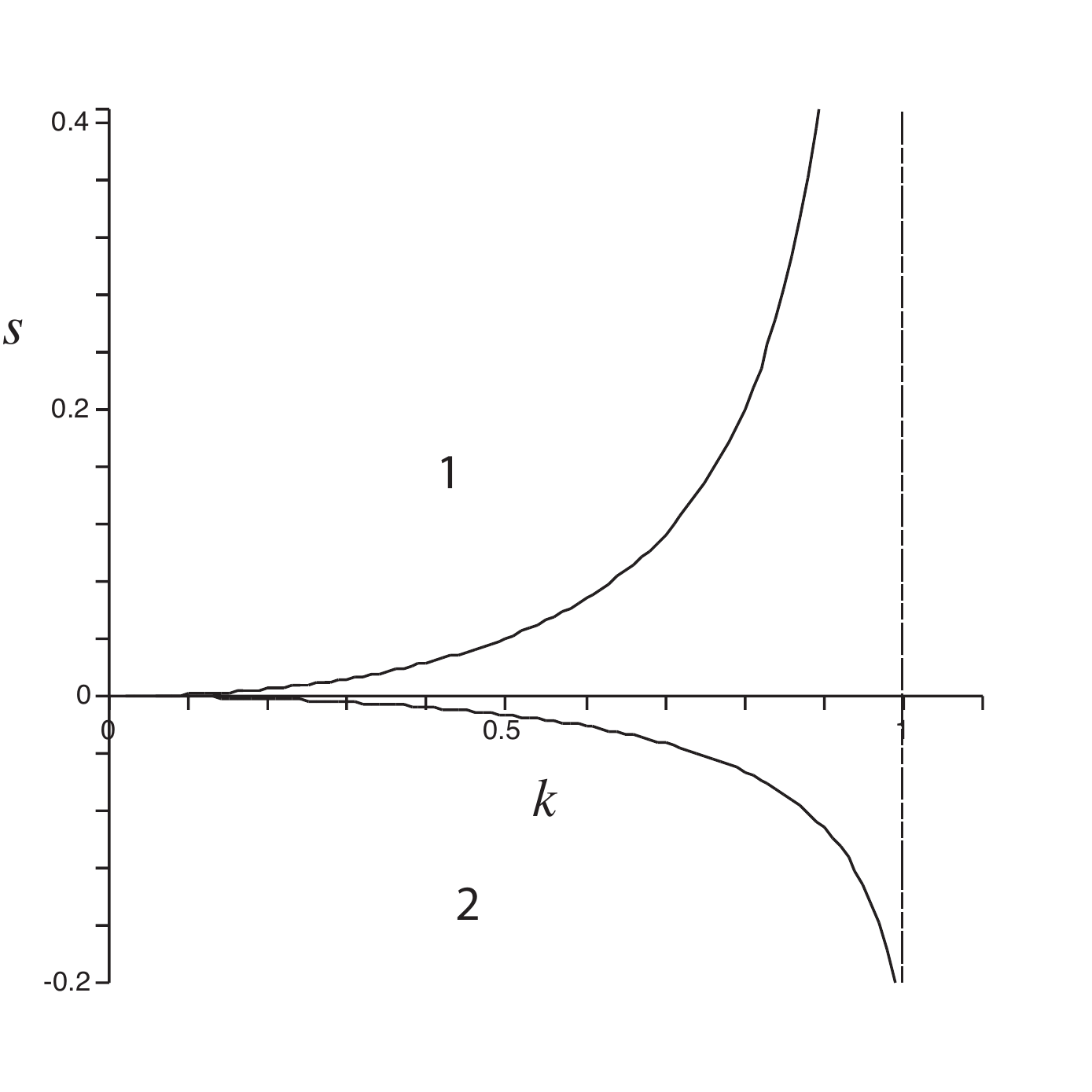}

\hfil (a) \hfill
\end{minipage}
\hspace{-.05\textwidth}
\begin{minipage}{\fred}
\includegraphics[width=\fred]{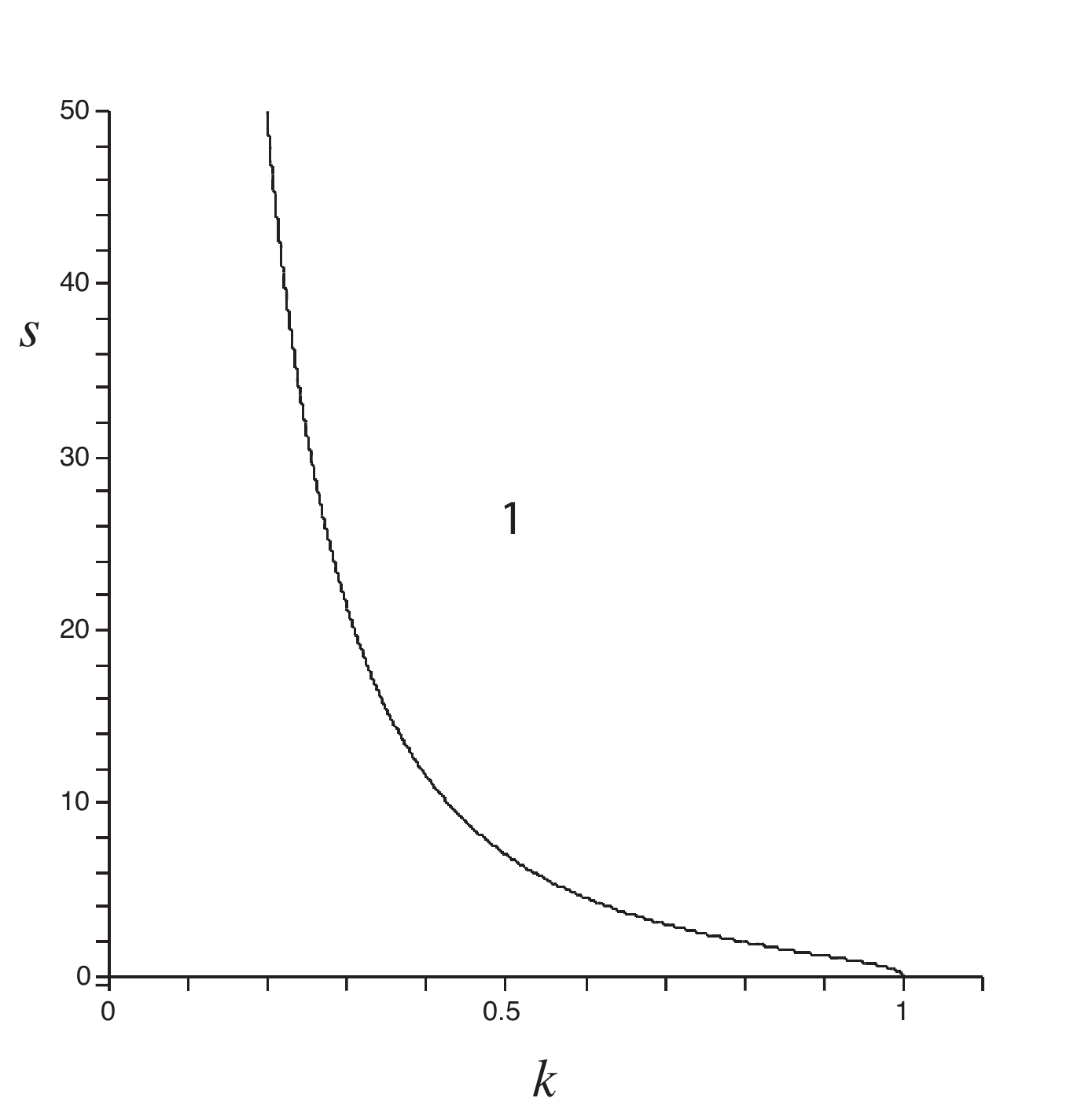}

\hfil (b) \hfill
\end{minipage}
\hspace{-.05\textwidth}
\begin{minipage}{\fred}
\includegraphics[width=\fred]{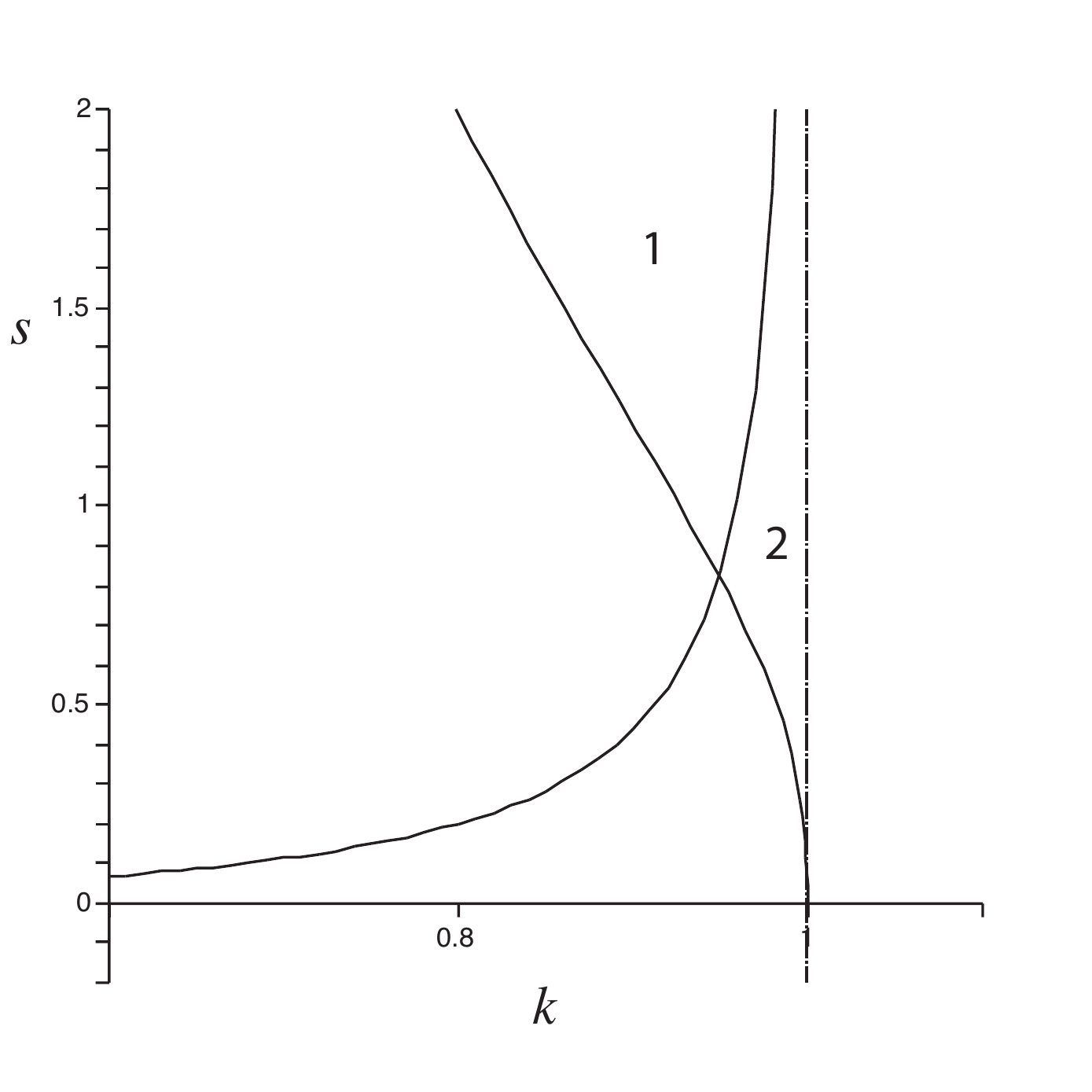}

\hfil (c) \hfill
\end{minipage}
\caption{In (a), region 1 (resp. 2) represents the subset of the $k$-$s$ space in which the inequality (\ref{condi1}) (resp. (\ref{condi2})) holds. 
In (b), region 1 is where the inequality in (\ref{req}) for the existence of the solution of CGL holds. The curve in (b) has an asymptote at $k=0$ and its  value at $k=1$ is $s=-1/4$.  
In (c), the top curve from (a) and the curve from (b) intersect near the point (0.949,0.824); regions 1 and 2 are explained in Figure \ref{fig7}}.\label{fig2}
\end{figure}

\setlength{\fred}{.3\textwidth}
\begin{figure}
\centering
\begin{minipage}{\fred}
\includegraphics[width=\fred]{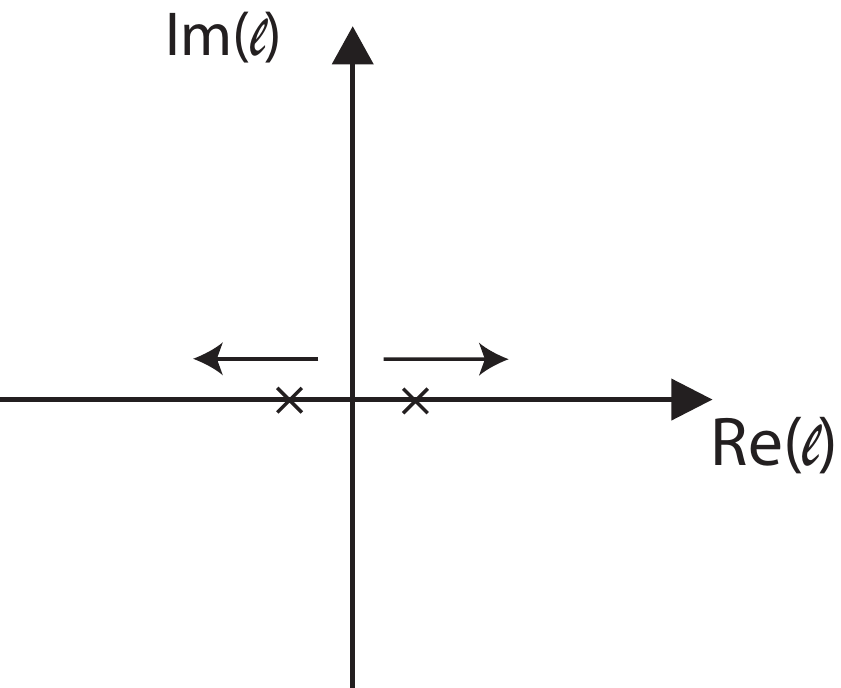}

\hfil (a) \hfill
\end{minipage}
\qquad
\begin{minipage}{\fred}
\includegraphics[width=\fred]{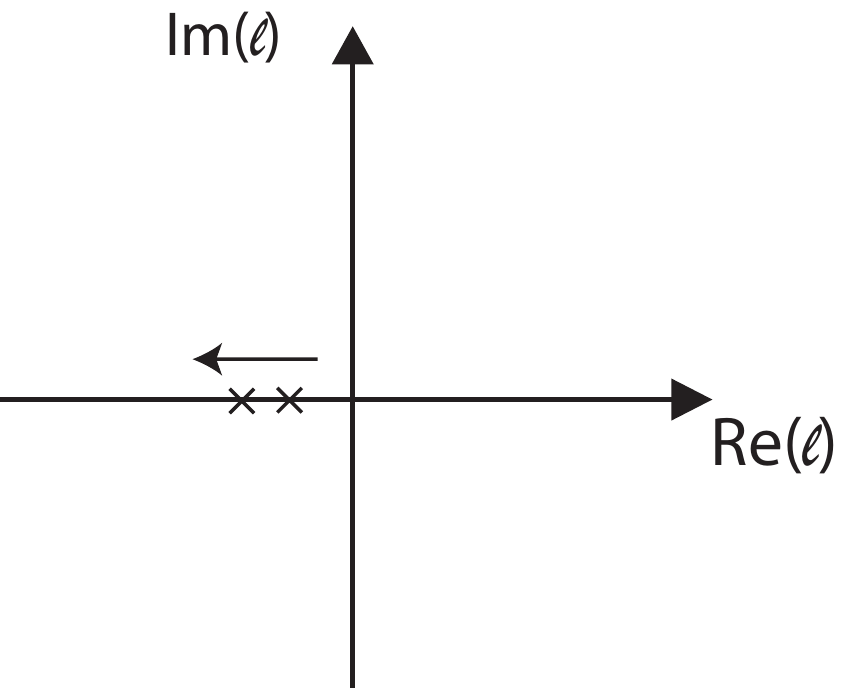}

\hfil (b) \hfill
\end{minipage}
\caption{Diagram (a) (resp. (b)) represents the behavior of the two emerging eigenvalues 
for values of $k$ and $s$ in region 1 (resp. 2) of Figure \ref{fig2}c . \label{fig7}}
\end{figure}




%

\appendix

\section{Expressions for ${\rm\mathbf{V}}^h_3$, ${\rm\mathbf{V}}^h_4$, ${\rm\mathbf{w}}^1_3$, ${\rm\mathbf{w}}^1_4$,
${\rm\mathbf{w}}^2_1$, and ${\rm\mathbf{w}}^2_2$}
\label{app:B}

The first component of ${\rm\mathbf{V}}^h_{3}$ is given by
$$
\left({\rm\mathbf{V}}^h_{3}\right)_1
 =-\delta k \left[ k'^2 \dn(\delta x) \sn(\delta x)\,\delta x
+ (2k^2-1) \left(\dfrac{E}{K}\delta x + Z(\delta x)\right)\right]
$$
Furthermore, $\left({\rm\mathbf{V}}^h_{3}\right)_2=\left({\rm\mathbf{V}}^h_{3}\right)_1$ and $\left({\rm\mathbf{V}}^h_{3}\right)_i,i=3,4$ are both given by the derivative of $\left({\rm\mathbf{V}}^h_{3}\right)_1$.

The first component of ${\rm\mathbf{V}}^h_{4}$ is given by
$$\left({\rm\mathbf{V}}^h_{4}\right)_1
=-\dfrac{k}{k'^2} \left[ \left(\dfrac{E}{K} - k'^2\right) \cn(\delta x)\, \delta x
+ \cn(\delta x)Z(\delta x) - \dn(\delta x)\sn(\delta x) \right]
$$
Furthermore, $\left({\rm\mathbf{V}}^h_{4}\right)_2=-\left({\rm\mathbf{V}}^h_{4}\right)_1$ and $\left({\rm\mathbf{V}}^h_{4}\right)_i,i=3,4$ are given, respectively, by the derivatives of $\left({\rm\mathbf{V}}^h_{4}\right)_i,i=1,2$.

The expressions for ${\rm\mathbf{w}}^1_3$, ${\rm\mathbf{w}}^1_4$,
${\rm\mathbf{w}}^2_1$, and ${\rm\mathbf{w}}^2_2$ are rather complicated. However, for the purpose
of calculating the derivatives of the Evans functions, we only need their expressions evaluated at $x=T=4K/\delta$:
$$
{\rm\mathbf{w}}^1_3(T)=
\displaystyle{
\frac{-2\ri k\,E}{k'^2}
 }\begin{bmatrix}
\displaystyle{ \frac{4k^2E}{\delta k'^2}}\\\\
\displaystyle{-\frac{4k^2E}{\delta k'^2}}\\\\1\\\\-1
 \end{bmatrix},\;\;
 {\rm\mathbf{w}}^1_4(T)=
\displaystyle{
\frac{2\ri k\,E}{\delta k'^2}
 }\begin{bmatrix}
\delta^{-1}\\\\
\delta^{-1}\\\\
\displaystyle{\frac{4k^2E}{k'^2}}\\\\
\displaystyle{\frac{4k^2E}{k'^2}}
\end{bmatrix},
$$
$$
 {\rm\mathbf{w}}^2_1(T)=
\displaystyle{
\frac{-1}{\delta^2 k \,k'^2}
 }\begin{bmatrix}
\displaystyle{E(2k^2-1)+K k'^2}\\\\
\displaystyle{E(2k^2-1)+K k'^2}\\\\
\displaystyle{\frac{2\delta\left(
k'^4 {\it K}^{2}
+2\,k'^2 \left( 2\,{k}
^{2}-1 \right) {\it E}\,  {\it K}
-\left( 3\,{k}^{2}-4\,{k}^{4}-1
 \right)  {\it E}^{2}
  \right)}{k'^2}}
  \\\\
\displaystyle{\frac{2\delta\left(
k'^4 {\it K}^{2}
+2\,k'^2 \left( 2\,{k}
^{2}-1 \right) {\it E}\,  {\it K}
-\left( 3\,{k}^{2}-4\,{k}^{4}-1
 \right)  {\it E}^{2}
  \right)}{k'^2}}
   \end{bmatrix},
 $$
 $$
 {\rm\mathbf{w}}^2_2(T)=
\displaystyle{
\frac{-1}{\delta^2k \,k'^2}
 }\begin{bmatrix}
-\displaystyle{\frac{
2\left( k'^4 {\it K}^{2}
-2\,  k'^2 {\it E}\, {\it K}
 + \left( {k}^{2}+1 \right)  {\it E}^{2}\right)}
{\delta k'^2}}
\\\\
\displaystyle{\frac{
2\left( k'^4 {\it K}^{2}
-2\,  k'^2 {\it E}\, {\it K}
 + \left( {k}^{2}+1 \right)  {\it E}^{2}\right)}
{\delta k'^2}}
\\\\
 \displaystyle{K k'^2-E}\\\\
\displaystyle{-K k'^2+E}
 \end{bmatrix}.
$$

\section{The Evans Function for Cnoidal NLS Solutions}
\label{app:nls}
In this appendix we use solutions of the AKNS spectral problem for
focusing NLS solution
\begin{equation}\label{qzerocn}
q_0(x,t) = \delta k e^{-\ri \alpha t}\cn(\delta x;k),\qquad \alpha = \delta^2(1-2k^2)
\end{equation}
(corresponding to \eqref{U0def})
to construct solutions of the ODE system \eqref{fourbysys0}
using squared AKNS eigenfunctions.  We demonstrate that for all
but five values of $\ell$ along the imaginary axis (including $\ell=0$), such solutions form a basis for the solution space of \eqref{fourbysys0}.
Using this information, we determine all zeros of the Evans function
(other than possibly the four other imaginary values) 
using the Floquet spectrum of $q_0$.

\subsection{Squared Eigenfunctions and the Evans Ansatz}
In this subsection, we will assume that $q$ is a genus one finite-gap
solution of focusing NLS \eqref{NLS}.  These solutions have the form
$$q(x,t) = e^{-\ri\alpha t} U_0(\xi),\qquad \quad \xi = x - c t,$$
for real constants $\alpha,c$ and a $T$-periodic function $U_0$
which is not necessarily real.  (We will later specialize
to the case of the cnoidal solution above.)  The change of variables
$u_1(\xi,t) = e^{\ri \alpha t} q_1(x,t)$,
$u_2(\xi,t) = e^{-\ri\alpha t} q_2(x,t)$ will allow us to linearize about
a stationary periodic solution when it is applied to system \eqref{NLS},
yielding
\begin{equation}\label{genuNLS}
\begin{aligned}
\ri u_{1t}+u_{1\xi\xi}-\ri c u_{1\xi}+\alpha u_1+2u_1^2 u_2& =0,\\
-\ri u_{2t}+u_{2\xi\xi}+\ri c u_{2\xi}+\alpha u_2+2u_1 u_2^2& =0.
\end{aligned}
\end{equation}
(Note that, here, the time derivative holds $\xi$ fixed.)
We linearize this system about the solution $u_1 = \overline{u_2} = U_0(\xi)$, yielding
\begin{equation}\label{genvLNS}
\begin{aligned}
\ri \rv_{1t} + \rv_{1\xi\xi} -\ri c \rv_{1\xi} +\alpha \rv_1 + 4|U_0|^2 \rv_1 +2 U_0^2 \rv_2 &=0, \\
-\ri\rv_{2t} + \rv_{2\xi\xi} +\ri c \rv_{2\xi} +\alpha \rv_2
+ 4|U_0|^2 \rv_2 +2 \overline{U_0}^2 \rv_1 &= 0.
\end{aligned}
\end{equation}
We will now show how to obtain solutions for this linearized system, satisfying the ansatz
\begin{equation}\label{genansatz}
\rv_1 = \re^{\ell t} w_1(\xi), \qquad \rv_2 = \re^{\ell t} w_2(\xi).
\end{equation}

Recall the AKNS system for focusing NLS \cite{AKNS, BBEIM}:
\begin{equation}\label{AKNS}
\vpsi_x = \begin{bmatrix}
-\ri \lambda & \ri q\\
\ri \overline{q} & \ri \lambda
\end{bmatrix} \vpsi,
\quad
\vpsi_t = \begin{bmatrix}
\ri (|q|^2 - 2\lambda^2) & 2\ri \lambda q - q_x \\
2\ri \lambda\overline{q} + \overline{q_x} & \ri(2\lambda^2 - |q|^2)
\end{bmatrix}\vpsi.
\end{equation}
As is well known \cite{FoLe86, MOv},
the squares of the components of $\vpsi$ give a solution
of the linearization of \eqref{NLS} at $q$.
This linearization is
\begin{equation}\label{LNS}
\begin{aligned}
\ri g_t  + g_{xx} + 4 |q|^2 g + 2q^2 h &=0, \\
-\ri h_t  + h_{xx} +4 |q|^2 h + 2\overline{q}^2 g &= 0,
\end{aligned}
\end{equation}
and the solution is given by $g=\psi_1^2$, $h=\psi_2^2$.
It is easy to check that the substitutions
\begin{equation}\label{gensubs}
q(x,t) = \re^{-\ri\alpha t} U_0(\xi), \quad g(x,t)=\re^{-\ri \alpha t}\rv_1(\xi,t),
\quad h(x,t)=\re^{\ri \alpha t} \rv_2(\xi,t)
\end{equation}
turn \eqref{LNS} into \eqref{genvLNS}.

It is also well known that one can use Riemann theta functions
to produce formulas for finite-gap solutions of NLS and
the corresponding solutions of \eqref{AKNS}, starting
with a hyperelliptic Riemann surface $\Sigma$ of genus $g$ and certain other data
(see \cite{BBEIM} or \cite{CI05}).
The solution \eqref{qzerocn} arises for $g=1$, with the four branch points
of $\Sigma$ in conjugate pairs in the complex plane, so that
the equation of $\Sigma$ is
\begin{equation}\label{Sigmaform}
\mu^2 = (\lambda-\lambda_1)(\lambda - \overline{\lambda_1})
(\lambda-\lambda_2)(\lambda-\overline{\lambda_2}),
\end{equation}
where $\realpart \lambda_1 < \realpart \lambda_2$.
(See \S3 of \cite{CI05} for a derivation of these solutions from
general finite-gap formulas.)
In genus one, the finite-gap solutions of \eqref{AKNS}, which
are known as {\em Baker eigenfunctions}, take the form
\begin{align*}
\psi_1 &= \exp\left(\ri (\Omega_1(P)-\tfrac E2)x + \ri (\Omega_2(P)+\tfrac N2)t\right)
\Theta_1(\xi,P), \\
\psi_2 &= \exp\left(\ri (\Omega_1(P)+\tfrac E2)x + \ri (\Omega_2(P)-\tfrac N2)t\right)
\Theta_2(\xi,P),
\end{align*}
where $E,N$ are constants determined by $\Sigma$,
$P=(\lambda,\mu)$ is an arbitrary point on $\Sigma$
that projects to $\lambda\in \bC$,  $\Omega_{1,2}$ are
certain Abelian integrals on $\Sigma$, and $\Theta_{1,2}$
will be defined below.  For the moment, we note that
$\Theta_1,\Theta_2$ depend only on $P$ and $\xi = x - c t$ (where $c$ is determined
by the branch points of $\Sigma$)
 and are $T$-periodic in $\xi$.

After the substitutions \eqref{gensubs} the squared Baker eigenfunctions
yield solutions to \eqref{genvLNS} of the form
$$\begin{bmatrix} \rv_1 \\ \rv_2 \end{bmatrix} =
\begin{bmatrix} e^{\ri \alpha t} \psi_1^2 \\ e^{-\ri \alpha t} \psi_2^2 \end{bmatrix} =
e^{ 2\ri \xi \Omega_1(P) + 2\ri t\left(\Omega_2(P) + c\Omega_1(P)\right)}
\begin{bmatrix}
e^{\ri(N-cE+\alpha)t-\ri E\xi} \Theta_1(\xi,P)^2 \\
e^{\ri(cE-N-\alpha)t + \ri E\xi}\Theta_2(\xi,P)^2
\end{bmatrix}.$$
In particular, these $\rv_1,\rv_2$ satisfy the ansatz \eqref{genansatz}
for $\alpha = cE - N$ and
\begin{equation}\label{ellformula}
\ell = 2\ri(\Omega_2(P) + c\Omega_1(P)),
\end{equation}
with
\begin{equation}\label{genwform}
w_1(\xi)  = e^{\ri (2\Omega_1(P)-E)\xi} \Theta_1(\xi;P)^2, \quad
w_2(\xi) = e^{\ri (2\Omega_1(P)+E)\xi}\Theta_2(\xi;P)^2.
\end{equation}

\begin{prop}\label{elltomu}
The formula \eqref{ellformula} implies that $\ell = 4\ri \mu$.
\end{prop}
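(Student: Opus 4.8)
The plan is to prove the equivalent statement $\Omega_2(P)+c\,\Omega_1(P)=2\mu$ as an identity of functions on $\Sigma$, from which $\ell=2\ri(\Omega_2+c\Omega_1)=4\ri\mu$ follows at once by \eqref{ellformula}. I would establish this by first proving the corresponding identity of meromorphic differentials,
\[
d\Omega_2+c\,d\Omega_1=2\,d\mu,
\]
and then integrating and fixing the constant of integration. Both sides are meromorphic differentials on the genus-one curve \eqref{Sigmaform} whose only poles lie over $\lambda=\infty$: the quasi-momentum and quasi-energy differentials $d\Omega_1,d\Omega_2$ are second-kind differentials coming from the AKNS spectral problem \eqref{AKNS}, while $2\,d\mu=\bigl(R'(\lambda)/\mu\bigr)\,d\lambda$, with $R(\lambda)=(\lambda-\lambda_1)(\lambda-\overline{\lambda_1})(\lambda-\lambda_2)(\lambda-\overline{\lambda_2})$, is exact and hence is also of the second kind with vanishing residues.

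The core of the argument is to match these two differentials at the two points $\infty^\pm$ over $\lambda=\infty$. First I would read off the principal parts of $d\Omega_1,d\Omega_2$ from the leading asymptotics of the Baker eigenfunction phases: the diagonal part of the $x$-flow gives $\Omega_1\sim\pm\lambda$, while the diagonal part of the $t$-flow gives $\Omega_2\sim\pm 2\lambda^2$, the sign being tied to the sheet on which $\mu\sim\pm\lambda^2$. Expanding $\mu=\pm\lambda^2\bigl(1-\tfrac12(\textstyle\sum_j\lambda_j)\lambda^{-1}+\cdots\bigr)$ at infinity gives the principal part of $2\,d\mu$, and the point of the wave speed $c$ is precisely that it is the coefficient, fixed by the branch points, that makes the subleading (double-pole) contribution of $d\Omega_2+c\,d\Omega_1$ agree with that of $2\,d\mu$, so that the full principal parts coincide at both $\infty^\pm$; this is the same $c$ that reduces the finite-gap solution to a traveling wave. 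Since $d\Omega_1,d\Omega_2$ are normalized to have vanishing $a$-period and $2\,d\mu$ is exact, the difference $d\Omega_2+c\,d\Omega_1-2\,d\mu$ is a holomorphic differential with vanishing $a$-period; as the space of holomorphic differentials is spanned by $d\lambda/\mu$, whose $a$-period is nonzero, this difference vanishes identically. (Equivalently, one may verify the polynomial identity $P_2(\lambda)+c\,P_1(\lambda)=R'(\lambda)$ directly, where $d\Omega_j=(P_j(\lambda)/\mu)\,d\lambda$ are the explicit second-kind differentials from the finite-gap construction \cite{CI05,BBEIM}.)

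It remains to integrate and determine the constant. Integrating from a fixed base point gives $\Omega_2(P)+c\,\Omega_1(P)=2\mu(P)+C$ for some constant $C$. To see $C=0$, I would invoke the hyperelliptic involution $\sigma\colon(\lambda,\mu)\mapsto(\lambda,-\mu)$: interchanging sheets negates the Floquet exponents, so $\Omega_j\circ\sigma=-\Omega_j$, and $\mu\circ\sigma=-\mu$, whence both sides are odd under $\sigma$. Applying $\sigma$ to $\Omega_2+c\Omega_1=2\mu+C$ and comparing forces $C=-C$, i.e. $C=0$. (Note that single-valuedness of $\Omega_2+c\Omega_1$, consistent with $2\mu$ being single-valued, is automatic once the differentials are shown equal, since then all periods of $d\Omega_2+c\,d\Omega_1$ equal those of the exact form $2\,d\mu$.) Hence $\Omega_2(P)+c\,\Omega_1(P)=2\mu$ and $\ell=4\ri\mu$.

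The step I expect to be the main obstacle is the matching of principal parts: it requires pinning down the subleading coefficients in the asymptotic expansions of $\Omega_1$ and $\Omega_2$ and the precise value of $c$ in terms of the branch points $\lambda_1,\lambda_2$ (equivalently, verifying $P_2+c\,P_1=R'$), while keeping the sheet labeling and sign conventions of \cite{CI05,BBEIM} consistent throughout. Once the two differentials are shown to coincide, the integration and the parity argument for the constant are immediate.
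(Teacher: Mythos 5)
Your proof is correct, but it takes a genuinely different route from the paper's. The paper argues at the level of \emph{functions}: it quotes from \cite{CI05} the period relation $W/V=-c$ to conclude that $\ell=2\ri(\Omega_2+c\,\Omega_1)$ is a single-valued function on $\Sigma$; it then uses oddness under the sheet exchange $\interchange$, together with the branch-point basepoint $\overline{\lambda_2}$, to see that $\ell$ vanishes at all four branch points, i.e.\ at the simple zeros of $\mu$; hence $\ell/\mu$ is holomorphic and bounded on the compact surface $\Sigma$, and a Liouville argument plus the leading asymptotics $\ell\sim\pm4\ri\lambda^2$, $\mu\sim\pm\lambda^2$ give $\ell\equiv4\ri\mu$. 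You instead prove the identity of \emph{differentials} $\rd\Omega_2+c\,\rd\Omega_1=2\,\rd\mu$ by matching principal parts at $\infty_\pm$ and using that the only holomorphic differential with vanishing $a$-period is zero, and then integrate. Each approach buys something. Yours needs sharper local data: the exact principal parts including the subleading double-pole term, equivalently the identification $c=-(\lambda_1+\lambda_2+\overline{\lambda_1}+\overline{\lambda_2})$ (your polynomial identity $P_2+c\,P_1=R'$); but in exchange it \emph{derives} the vanishing of all periods of $\rd\Omega_2+c\,\rd\Omega_1$ --- precisely the relation $W/V=-c$ --- as a corollary. The paper needs that period relation as external input from \cite{CI05}, but its Liouville step requires only the leading coefficients at $\infty_\pm$ and no discussion of the space of holomorphic differentials. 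Both proofs use the sheet-exchange parity, though for different ends: the paper to force $\ell$ to vanish at the branch points (so that $\ell/\mu$ is holomorphic), you to fix the integration constant --- a constant you could also kill immediately by integrating from the paper's basepoint $\overline{\lambda_2}$, where $\mu$ and the integrals $\Omega_i$ all vanish. (One notational caution: your involution $\sigma:(\lambda,\mu)\mapsto(\lambda,-\mu)$ is the paper's $\interchange$; the paper reserves $\sigma$ for $(\lambda,\mu)\mapsto(-\lambda,\mu)$, so you should rename it to avoid a clash.)
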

\begin{proof}
The surface $\Sigma$ has genus 1, and a basis $a$ and $b$
(shown in Figure \ref{g1basisfancy}) for homology cycles.
The differentials $\rd\Omega_1,\rd\Omega_2$ have
zero $a$-period.  Denoting their $b$-periods as $V,W$ respectively,
from \cite{CI05} we have
$$W/V= -c= (\lambda_1+ \lambda_2 + \overline{\lambda_1}+\overline{\lambda_2}).$$
(Note that the meaning of $c$ here is $-1$ times its definition in \cite{CI05}.)
Thus, the differential
$$\rd\ell = 2 \ri (\rd\Omega_2 + c\, \rd \Omega_1)$$
has zero $a$- and $b$-periods.  So, although $\Omega_{1,2}$ are
not well-defined on $\Sigma$, and so $\ell$ is a well-defined function on
$\Sigma$.

The integrals $\Omega_i$ use the branchpoint $\overline{\lambda_2}$ as
basepoint, so $\ell=0$ there.  Because $\interchange^*\rd\Omega_i = -\rd\Omega_i$
for $i=1,2$ (where $\interchange$ is the sheet exchange automorphism
$(\lambda,\mu) \mapsto (\lambda, -\mu)$), then $\interchange^* \ell = - \ell$.
So, it follows that $\ell$ vanishes at all four branch points, which are the
same points at which the coordinate $\mu$ vanishes.
The surface $\Sigma$ has two points at infinity, called $\infty_{\pm}$, where
both $\lambda$ and $\mu$ approach complex infinity and
$$\dfrac{\mu}{\lambda^{g+1}} = \dfrac{\mu}{\lambda^2}
= \pm \left(1 +\dfrac{c}2 \lambda^{-1}+ \calO(\lambda^{-2})\right)$$
respectively.
Combining the asymptotic expansions for $\Omega_1$ and $\Omega_2$ gives
$$\dfrac{\ell}{\lambda^2} = \pm 2\ri\left( 2 + c\lambda^{-1} + \calO(\lambda^{-2})
\right).$$
Therefore,
$$\ell/\mu = 4\ri + \calO(\lambda^{-2}).$$
Because $\ell/\mu$ is bounded and holomorphic on $\Sigma$, then it must be equal
to the constant $4\ri$, i.e.,
$$\ell = 4\ri \mu.$$
In other words, we have just proved the identity
\begin{equation}\label{muidentity}
\Omega_2(P) = -c \Omega_1(P) + 2 \mu
\end{equation}
for genus one NLS solutions.
\end{proof}

\subsection{Forming a Basis of Solutions}
Proposition \ref{elltomu} and \eqref{Sigmaform} imply that
for generic values of $\ell$ there are four distinct points $P \in\Sigma$
for which the Baker eigenfunctions may be used to construct a solution of
the eigenvalue problem \eqref{eigenvalueproblem}.  The exceptional values
are those for which two roots of \eqref{Sigmaform}, as a polynomial
equation for $\lambda$, coincide.

From here on, we specialize to the case where the branch points in \eqref{Sigmaform}
satisfy $\lambda_2 = -\overline{\lambda_1}$, with $\realpart \lambda_1 \ne 0$,
which yields the NLS solution \eqref{qzerocn}.  In this case,
$c=0$ and $E=V/2$, the branch points are related to
the elliptic modulus by $\lambda_1 / |\lambda_1| = -k' + \ri k$,
and the finite-gap solution coincides with \eqref{qzerocn} with 
$\delta = 2|\lambda_1|$.
The exceptional values of $\mu$ are
\begin{equation}\label{muexcept}
\mu =\pm |\lambda_1|^2, \qquad \mu = \pm \impart(\lambda_1^2),
\end{equation}
corresponding respectively to a double root occurring at $\lambda=0$,
or two pairs of roots coinciding
at opposite points along the real axis (if $k^2 < 1/2$) or imaginary axis (if $k^2 > 1/2$).

When $\mu$ is not an exceptional value, we will construct a matrix solution $W(\xi;\ell)$ of
\eqref{fourbysys0} where each column is of the form
$[w_1(\xi),w_2(\xi),w_1'(\xi),w_2'(\xi)]^T$ and $w_1,w_2$ are given by
\eqref{genwform} for one of the four points $P=(\lambda,\mu)$ satisfying \eqref{Sigmaform}
for the given $\mu$.  For purposes of showing that the matrix solution is nonsingular,
it suffices to evaluate its determinant at one value of $\xi$.

We now need to specify the form of the factors $\Theta_1,\Theta_2$ in \eqref{genwform}.
Specializing the formulas given in \S5 of \cite{CI05},  and using $\xi = x-ct = x$,
we have
$$\Theta_1(x;P) = \dfrac{\theta(A(P)+\ri V x - D)\theta(D)}
{\theta(\ri V x-D)\theta(A(P)-D)}, \quad
\Theta_2(x;P) = -\ri e^{\Omega_3(P)}\dfrac{\theta(A(P)+\ri V x - D-R)\theta(D)}
{\theta(\ri V x-D)\theta(A(P)-D)}.$$
In these formulas,
\begin{itemize}
\item $V = \pi |\lambda_1 -\overline{\lambda_2}|/K = 2\pi |\lambda_1|/K$;
\item $A(P)$ is the Abel map, obtained by integrating the differential
$\omega = V \, \rd\lambda/(2\mu)$ from $\infty_-$ to $P$ on $\Sigma$;%
\item $\Omega_3(P)$ is the integral, from $\overline{\lambda_2}$ to $P$,
of the unique meromorphic differential $\rd\Omega_3$ on $P$ which has zero
$a$-period and satisfies $\rd\lambda_1 \sim \pm \lambda^{-1}\rd\lambda$ near $\infty_\pm$;
\item $R$ is minus the period of $\rd\Omega_3$ on the cycle $b$ running
from $\lambda_1$ to $\lambda_2$ (see Figure \ref{g1basisfancy}), and is also
the value $A(\infty_+)$, when the
path of integration is chosen to avoid the homology cycles $a$ and $b$
(see \S5 in \cite{CI05}); in the
case when $c=0$, $R=\pi(K'/K + \ri)$ (see \S3 in \cite{CI05});
\item $D$ is an arbitrary constant which is pure imaginary;
\item $\theta(z)$ is the Riemann theta function with period $2\pi \ri$
and quasiperiod $B= - 2\pi K'/K$.  It is related to the
Jacobi theta functions of modulus $k$ by $\theta(z) = \theta_3(z/(2\ri))$.
\end{itemize}

\begin{figure}[h]
\centering
\includegraphics[height=2in]{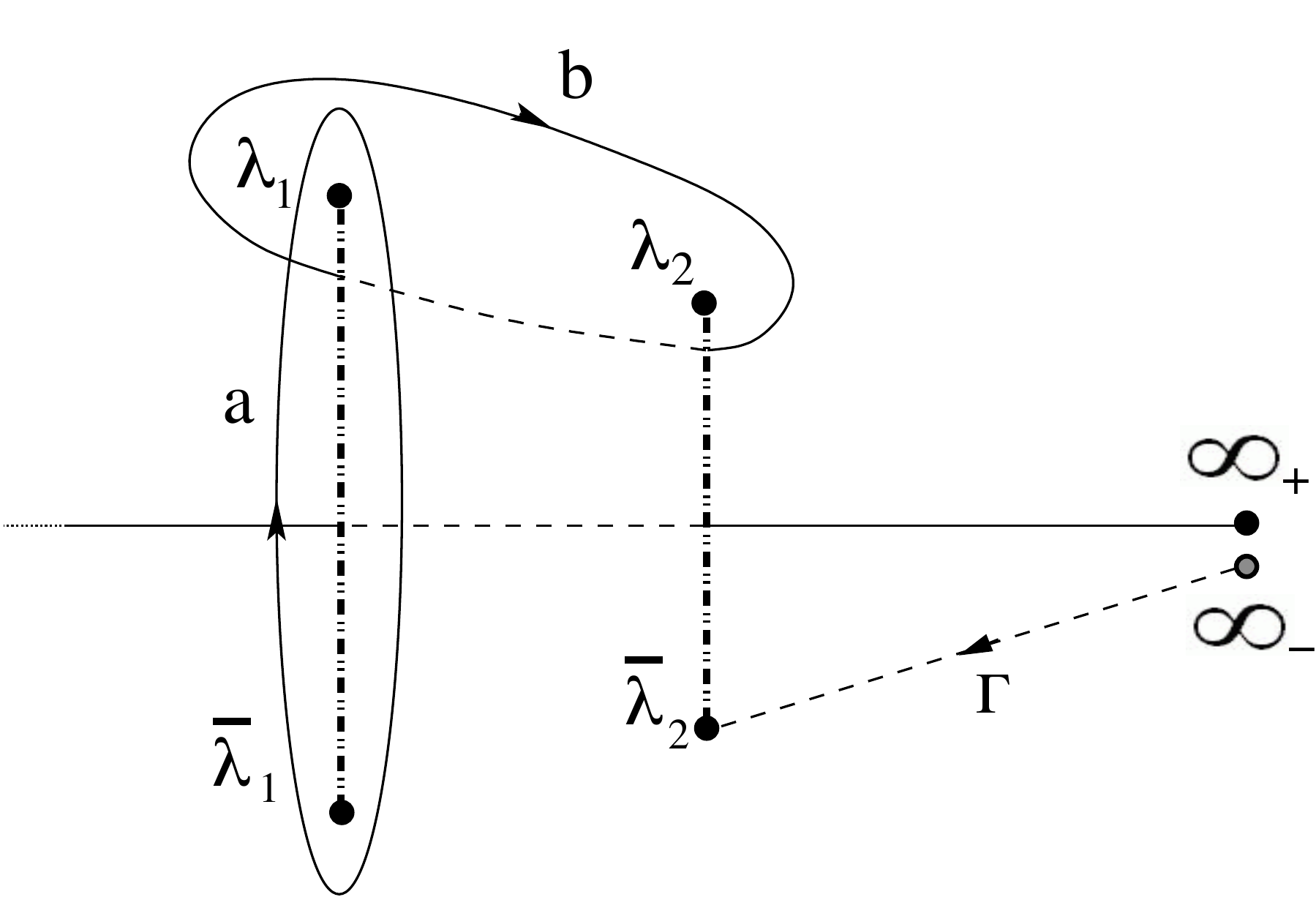}
\caption{Homology cycles and other integration paths on the
genus one Riemann surface with complex conjugate branch points.
Solid curves appear on the upper sheet, dashed curves on the
lower sheet, and branch cuts extend between each branch point
and its conjugate.
The horizontal line represents a set of points where
$\mu$ is real and positive.}\label{g1basisfancy}
\end{figure}

In genus one, the constant $D$ has no significance, as it can be absorbed
through a shift in $x$.  Thus, we may assume $D=0$ in the above formulas.
Once this is done, the columns of $W$ for $\xi=0$ take the form
\begin{equation}\label{winfgh}
\begin{bmatrix}w_1(0)\\ w_2(0)\\ w_1'(0) \\ w_2'(0) \end{bmatrix}
=
\begin{bmatrix} 1 \\ -f(P)^2 \\ 2\ri g(P) \\ -2\ri f(P)^2 h(P) \end{bmatrix},
\end{equation}
where
\begin{multline*}
f(P) := e^{\Omega_3(P)}\dfrac{\theta(A(P)-r)}{\theta(A(P))}, \quad 
g(P) := \Omega_1(P) +V \Theta(A(P))-\frac{E}2, \\
h(P) := \Omega_1(P) +V \Theta(A(P)-r)+\frac{E}2,\hspace{8cm}
\end{multline*}
and $\Theta(z) = \theta'(z)/\theta(z)$ is the logarithmic derivative
of the Riemann theta function.  (Note that $A(P)$ and $\Omega_i(P)$ are
not individually well-defined on $\Sigma$, because of
the nonzero periods of the corresponding differentials along the
homology cycles. We take the convention that
the paths of integration for these differ by a fixed path $\Gamma$ (shown
in Figure \ref{g1basisfancy}) from $\infty_-$ to
$\overline{\lambda_2}$ in $\Sigma_0$,
where $\Sigma_0$ denotes the simply-connected domain that
results from cutting $\Sigma$ along the homology cycles.
With this convention, $f,g,h$ are well-defined meromorphic functions on $\Sigma$.)

The Riemann surface $\Sigma$ has two holomorphic involutions,
namely $\interchange:(\lambda,\mu) \mapsto (\lambda,-\mu)$ and
$\sigma:(\lambda,\mu) \mapsto (-\lambda,\mu)$.  Because $\interchange^* \rd\Omega_i =
-\rd\Omega_i$, $\interchange^* \omega = -\omega$ and
$\interchange$ fixes the basepoint for $\Omega_i$, $i=1,2,3$,
it follows that
$$f(\interchange P) = 1/f(P),\qquad
g(\interchange P) = -h(P).$$
Because $\sigma^* \rd\Omega_1 = -\rd\Omega_1$,
$\sigma^* \rd\Omega_3 = \rd\Omega_3-\omega$, $\sigma^* \omega = -\omega$,
and $E=V/2$,
it follows that
$$f(\sigma P) = f(P), \qquad g(\sigma P) = -g(P).$$

Let $P_1, P_2, P_3, P_4$ be four points on $\Sigma$ corresponding to
a given (non-exceptional) value of $\mu$.  Without loss of generality,
we may assume that $P_2 = \sigma(P_1)$ and $P_4= \sigma(P_3)$.  Using the
above formulas for the behaviour of $f,g,h$ under $\sigma$, we find that
the matrix with columns given by \eqref{winfgh} is
$$W(0) = \begin{bmatrix}
1 & 1 & 1 & 1 \\
-f(P_1)^2 & -f(P_1)^2 & -f(P_3)^2 & -f(P_3)^2 \\
2\ri g(P_1) & -2\ri g(P_1) & 2\ri g(P_3) & -2\ri g(P_3) \\
-2\ri f(P_1)^2 h(P_1) & 2\ri f(P_1)^2 h(P_1)
& -2\ri f(P_3)^2 h(P_3) & 2\ri f(P_3)^2 h(P_3)
\end{bmatrix}.
$$

\begin{prop}\label{niftygh} The functions $g, h$ satisfy the identity
$$- g(P)= h(P) =\lambda,\qquad P=(\lambda,\mu)\in \Sigma.$$
\end{prop}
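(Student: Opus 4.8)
The plan is to prove the identity $-g(P) = h(P) = \lambda$ by exploiting the explicit definitions of $g$ and $h$ together with the symmetry relations under the involutions $\interchange$ and $\sigma$, and then pinning down the result by a Liouville-type argument on the compact surface $\Sigma$. First I would observe that the two claimed equalities are not independent: since $g(\interchange P) = -h(P)$ and $\interchange P = (\lambda, -\mu)$ leaves $\lambda$ fixed, the single identity $h(P) = \lambda$ immediately forces $g(P) = -h(\interchange P) = -\lambda$ as well. So it suffices to establish $h(P) = \lambda$, and the symmetry $g(\sigma P) = -g(P)$ with $\sigma(\lambda,\mu) = (-\lambda,\mu)$ is consistent with this, giving a useful internal check.

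To prove $h(P) = \lambda$ I would consider the meromorphic function $F(P) := h(P) - \lambda$ on $\Sigma$ and argue it is identically zero. The strategy is to show $F$ is holomorphic everywhere (including the two points at infinity $\infty_\pm$) and that it vanishes at enough points, or more simply that a globally holomorphic function on a compact Riemann surface which is bounded must be constant, with the constant determined by a single asymptotic computation. The natural approach is to examine the behavior of $h(P) = \Omega_1(P) + V\,\Theta(A(P)-r) + \tfrac E2$ as $P \to \infty_\pm$. Here $\Omega_1(P)$ has a known asymptotic expansion from the excerpt, namely $\rd\Omega_1$ pairs with the leading behavior $\mu/\lambda^2 \sim \pm(1 + \tfrac c2\lambda^{-1} + \calO(\lambda^{-2}))$, and with $c=0$ in the cnoidal specialization this simplifies. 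The logarithmic derivative $\Theta(A(P)-r)$ of the theta function will be bounded (indeed it approaches a finite value as $A(P)$ tends to the image of $\infty_\pm$ under the Abel map), so the only unbounded contribution to $h$ near infinity comes from $\Omega_1$, which grows linearly in $\lambda$. The crux is to verify that the coefficient of this linear growth is exactly $1$, so that $h(P) - \lambda$ is bounded at both points at infinity.

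I would then check that $F = h - \lambda$ has no poles elsewhere: the potential poles of $\Theta(A(P)-r)$ occur at the zeros of $\theta(A(P)-r)$, but these are precisely the points where $f(P)$ has zeros or poles (from the definition $f(P) = e^{\Omega_3(P)}\theta(A(P)-r)/\theta(A(P))$), and one must confirm that any such singularity of $\Theta$ is cancelled or does not arise within the fundamental domain, using the genus-one theta-function structure and the fact that $A(P)$ ranges over a single fundamental cell. Once $F$ is shown to be holomorphic and bounded on the compact surface $\Sigma$, it is constant; evaluating at a convenient point (for instance a branch point, or by matching the constant term in the expansion at $\infty_+$ where $\lambda \to \infty$ and comparing with the $E/2$ shift) forces that constant to be zero.

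The main obstacle I expect is the bookkeeping at infinity: carefully matching the linear-in-$\lambda$ term of the Abelian integral $\Omega_1(P)$ against $\lambda$ itself, and confirming that the theta and $\Omega_3$ contributions neither spoil the linear coefficient nor introduce spurious poles. This requires the precise normalizations of $\rd\Omega_1$, $\omega = V\rd\lambda/(2\mu)$, and $R$ in the $c=0$ case, all of which are imported from \cite{CI05}; the delicate point is that $A(P)$, $\Omega_1(P)$, and $\Omega_3(P)$ are individually multivalued on $\Sigma$, so I must ensure that the chosen integration-path convention (differing by the fixed path $\Gamma$) makes $h$ genuinely single-valued before invoking the Liouville argument, exactly as the parenthetical remark in the excerpt insists for $f, g, h$.
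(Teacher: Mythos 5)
Your overall strategy (a Liouville argument on the compact surface $\Sigma$, after using $g(\interchange P)=-h(P)$ to reduce to a single identity) is the same in spirit as the paper's, but the key asymptotic claim on which your argument rests is false, and the proof would break exactly where you locate its ``crux.'' You assert that $\Theta(A(P)-R)$ stays bounded as $P\to\infty_{\pm}$, so that the only linear growth of $h$ comes from $\Omega_1$, whose growth coefficient you hope to verify equals $+1$ at both infinities. Neither statement holds at $\infty_-$. First, $\rd\Omega_1$ is odd under the sheet exchange $\interchange$, and $\interchange$ swaps $\infty_+$ and $\infty_-$, so $\Omega_1\sim+\lambda$ at $\infty_+$ but $\Omega_1\sim-\lambda$ at $\infty_-$ (the paper states this explicitly). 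Second, since the Abel map is based at $\infty_-$, we have $A(\infty_-)=0$, so $A(P)-R\to -R$ there; $\theta$ is even and $\theta(R)=0$, so $\theta(A(P)-R)$ vanishes at $\infty_-$ and its logarithmic derivative has a pole there. Indeed, in genus one the Abel map is a bijection onto the Jacobian, so the unique zero of $\theta(A(P)-R)$ on $\Sigma$ is precisely $\infty_-$: your own proposed check ``that no poles arise at zeros of $\theta(A(P)-r)$'' fails at exactly the point where you assume boundedness. The correct mechanism, which is the heart of the paper's proof (carried out there for $g$ at $\infty_+$), is a cancellation between the two unbounded terms: near $\infty_-$, with local coordinate $w=\lambda^{-1}$, one has $A(P)=\tfrac{V}{2}w+\calO(w^3)$, and the simple zero of $\theta$ gives $V\,\Theta(A(P)-R)=2w^{-1}+\calO(1)=2\lambda+\calO(1)$, so that $h=-\lambda+2\lambda+\calO(1)=\lambda+\calO(1)$. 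Under your assumptions one would instead conclude $h\sim-\lambda$ at $\infty_-$, so $h-\lambda$ would appear to have a pole there and the Liouville argument would collapse.

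A second, smaller gap: even granting that $h-\lambda$ is bounded and hence constant, your method for evaluating the constant is not workable as stated. Nothing available gives the value of $h$ at a branch point (the symmetry $g(\interchange P)=-h(P)$ only yields $g+h=0$ at points fixed by $\interchange$), and extracting the constant term of the expansion at $\infty_+$ requires the $\calO(1)$ terms of $\Omega_1$ and of the theta factors, which you have not controlled. The paper sidesteps this by using $\sigma$-oddness of $g$ to force $g=0$ at the points over $\lambda=0$, then applying Liouville to the \emph{ratio} $g/\lambda$, whose constant value is read off from the leading asymptotics alone. If you wish to keep your difference formulation, note that $h$ is also $\sigma$-odd (this follows from $g(\sigma P)=-g(P)$ together with $h(P)=-g(\interchange P)$ and $\sigma\interchange=\interchange\sigma$), so $h-\lambda$ is a $\sigma$-odd constant and therefore zero; some such argument must replace the vague ``evaluate at a convenient point.''
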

\begin{proof} The poles of $g(P) = \Omega_1(P) + \Theta(A(P))-E/2$ can only occur
where $\Omega_1$ has a pole or $\theta(A(P))=0$.  (We now work on the cut
Riemann surface $\Sigma_0$.)  These occur only
at $\infty_-$ (near which $\Omega_1 \sim -\lambda+\calO(1)$, but $\theta(A(P)) = \theta(0)=\sqrt{2K/\pi}\ne 0$),
and at $\infty_+$ (near which $\Omega_1 \sim \lambda+\calO(1)$ and
$\theta(A(P))=\theta(R)=0$).  Near $\infty_+$, use $w=\lambda^{-1}$ as local coordinate.
Then $A(P) = \tfrac{V}2 \int_{\infty_-}^P \mu^{-1}\rd\lambda$ implies that
$$\dfrac{\rd A(P)}{\rd w} = -\dfrac{V}{2 w^2 \mu} =
-\dfrac{V}2 (1 + \calO(w^2))$$
and therefore $A(P) = R - \tfrac{V}{2} w + \calO(w^3)$.  Because $\theta(z)$
has a simple zero at $z=R$,
$$g(P) = \Omega_1(P) +V\dfrac{\theta'(A(P))}{\theta(A(P))} - \frac{E}2
= w^{-1} + V\left(\dfrac{-2}{V}\right) w^{-1} + \calO(1) = -\lambda + \calO(1).$$

The oddness of $g$ with respect to $\sigma$ implies that
$g(P)=0$ when $P$ lies over the origin in the $\lambda$-plane.
Therefore, the quotient $g(P)/\lambda$ is a bounded holomorphic
function on $\Sigma$, and so must be a constant.  The above asymptotic expansion
for $g$ implies that $g(P) = -\lambda$, and it follows from the
fact that $h(\interchange P) = -g(P)$ that $h(P) = \lambda$.
\end{proof}

Taking Prop. \ref{niftygh} into account, we compute
\begin{equation}\label{detN}
\det  W(0) = -16 \lambda(P_1)\lambda(P_3) (f(P_1)^2 - f(P_3)^2)^2.
\end{equation}

\begin{prop}\label{nonsingN}
For $\mu$ not equal to zero or any of the
exceptional values in \eqref{muexcept}, the matrix $N$ is nonsingular.
{\rm (We conjecture that the analogous statement is true for genus one finite-gap
solutions in general.)}
\end{prop}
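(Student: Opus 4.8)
The plan is to show that the matrix $N=W(0)$ is nonsingular by proving that the determinant already computed in \eqref{detN},
\[
\det W(0) = -16\,\lambda(P_1)\lambda(P_3)\bigl(f(P_1)^2 - f(P_3)^2\bigr)^2,
\]
is nonzero, treating its three factors in turn. The two factors $\lambda(P_1),\lambda(P_3)$ are easy: by \eqref{Sigmaform}, $\lambda(P_i)=0$ would force $\mu^2=(0-\lambda_1^2)(0-\overline{\lambda_1}^2)=|\lambda_1|^4$, i.e.\ $\mu=\pm|\lambda_1|^2$, which is one of the exceptional values \eqref{muexcept}. Hence away from those values $\lambda(P_1)\lambda(P_3)\neq 0$, and the entire difficulty is concentrated in the factor $f(P_1)^2-f(P_3)^2$.

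To control that factor I would pin down $f^2$ as an explicit rational function on the quotient curve. Since $f(\sigma P)=f(P)$, the function $f^2$ descends to a meromorphic function $F$ on $\tilde\Sigma:=\Sigma/\sigma$, which is the genus-zero curve $\mu^2=(z-\lambda_1^2)(z-\overline{\lambda_1}^2)$ in the coordinates $z=\lambda^2$ and $\mu$. On $\tilde\Sigma$ I introduce the coordinate $u:=\mu-z+\realpart(\lambda_1^2)$; from $\mu^2-(z-\realpart\lambda_1^2)^2=(\impart\lambda_1^2)^2$ one gets $u\,(\mu+z-\realpart\lambda_1^2)=(\impart\lambda_1^2)^2$, so (using $\impart\lambda_1^2=-2|\lambda_1|^2kk'\neq 0$, which follows from $\lambda_1/|\lambda_1|=-k'+\ri k$) the function $u$ is a global coordinate identifying $\tilde\Sigma$ with $\mathbb{P}^1$. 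The main step is then to show that $f^2u^2$ is constant, and for this I compute the divisor of $f$ by the same asymptotic analysis used in the proof of Proposition~\ref{niftygh}. Near $\infty_+$ one has $A(P)\to R$ with $\theta(R)=0$ while $e^{\Omega_3(P)}\sim\lambda$, producing a double pole of $f$; near $\infty_-$ one has $A(P)\to 0$, so $\theta(A(P)-R)\to\theta(-R)=\theta(R)=0$ and $e^{\Omega_3(P)}\sim\lambda^{-1}$, producing a double zero. Because in genus one $\theta(A(P))$ and $\theta(A(P)-R)$ each have a single zero, $f$ has no other zeros or poles, whence $\operatorname{div}(f)=2\,\infty_--2\,\infty_+$. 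A parallel expansion of $u$ gives $\operatorname{div}(u)=2\,\infty_+-2\,\infty_-$ on $\Sigma$, so $f^2u^2$ is holomorphic and nonvanishing on the compact surface $\Sigma$ and must equal a constant $c$; that is, $F=c/u^2$.

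With $F=c/u^2$ in hand the conclusion is immediate. The points $\tilde P_1,\tilde P_3$ carry the common value $\mu$ but distinct $z$-coordinates, so in the $u$-coordinate they are the two roots of $u^2-2\mu\,u+(\impart\lambda_1^2)^2=0$, giving $u_1+u_3=2\mu$ and $u_1u_3=(\impart\lambda_1^2)^2$. Then
\[
f(P_1)^2-f(P_3)^2 = c\,\frac{(u_3-u_1)(u_3+u_1)}{(u_1u_3)^2},
\]
which vanishes exactly when $u_1=u_3$ (discriminant zero, i.e.\ $\mu=\pm\impart\lambda_1^2$) or when $u_1+u_3=0$ (i.e.\ $\mu=0$). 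Thus excluding $\mu=0$ together with the exceptional values \eqref{muexcept} makes all three factors nonzero, and $W(0)$ is nonsingular. I expect the only real obstacle to be the divisor computation for $f$ — carefully matching the locations of the zeros of $\theta(A(P))$ and $\theta(A(P)-R)$ with the logarithmic behaviour of $\Omega_3$ at $\infty_\pm$; once $F=c/u^2$ is established, the rest is elementary and even identifies precisely the values of $\mu$ at which the construction degenerates.
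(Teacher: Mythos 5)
Your proposal is correct, and it takes a genuinely different route from the paper's proof. Both arguments share the same skeleton: starting from \eqref{detN}, the factor $\lambda(P_1)\lambda(P_3)$ vanishes only when $\mu=\pm|\lambda_1|^2$ by \eqref{Sigmaform}, so the content is showing $f(P_1)\ne\pm f(P_3)$, and both ultimately rest on the fact that $f$ is determined up to a constant by having a double zero at $\infty_-$ and a double pole at $\infty_+$. The paper gets at this by symmetrizing over the sheet exchange $\interchange$: it forms $F=f+1/f$, a function of $\lambda$ alone, observes that it is a degree-two polynomial from the pole of $f$ at $\infty_+$, pins down its coefficients using the branch-point values $f(\pm\lambda_1)=1$, $f(\pm\overline{\lambda_1})=-1$ imported from \S7 of \cite{isopaper}, and then solves the resulting quadratic for $f$ to obtain \eqref{formulaforf}, concluding with Vieta's relation $y_1^2+y_3^2=2d$, where $d=\realpart(\lambda_1^2)$. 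You instead symmetrize over $\sigma$ and determine $f$ by a divisor computation, using only facts already present in the proof of Proposition~\ref{niftygh} (namely $A(P)=R-\tfrac{V}{2}w+\calO(w^3)$ at $\infty_+$, $e^{\Omega_3}\sim\lambda^{\pm1}$ near $\infty_\pm$, $\theta(R)=0$) together with the evenness of $\theta$ and the genus-one count of theta zeros; this yields $\operatorname{div}(f)=2\,\infty_--2\,\infty_+$, hence $f=c/u$ with $u=\mu-\lambda^2+d$, and the conclusion follows from Vieta applied to $u^2-2\mu u+(\impart\lambda_1^2)^2=0$. The two identifications of $f$ are literally equivalent, since $(\lambda^2-d+\mu)\,u=(\impart\lambda_1^2)^2$ turns \eqref{formulaforf} into your $c/u$; what differs is the input. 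Your route is self-contained within the theta-function apparatus of the appendix and avoids the external citation to \cite{isopaper}; its cost is the appeal to the Riemann vanishing theorem, which is harmless in genus one (the Abel map is a biholomorphism onto the Jacobian, and $\theta$ has a single zero per lattice cell), and it has the additional virtue of identifying transparently which values of $\mu$ make the determinant degenerate. Two points you should make explicit in a final write-up: the constant $c$ is nonzero (immediate, since $fu$ has empty divisor and $f\not\equiv0$, so $fu$ is a nonvanishing holomorphic function on a compact surface, hence a nonzero constant), and $f$ itself, not merely $f^2$, is $\sigma$-invariant, so you can work with $fu$ directly and skip the passage to the quotient curve entirely.
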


\begin{proof} Because we are excluding the exceptional values, neither of
the $\lambda$-values in \eqref{detN} are zero, and it only remains to
establish that $f(P_1) \ne \pm f(P_3)$.

Define $F(P) = f(P) + 1/f(P)$, which is a well-defined
meromorphic function of $\lambda$.  The
formula for $f(P)$ shows that its only pole is a second-order pole at $\infty_+$
(because $\exp(\Omega_3(P)) \sim \lambda$ near there),
and hence $F$ is a second-order polynomial in $\lambda$.  It is
known that $f(\pm \lambda_1) = 1$ and $f(\pm \overline{\lambda_1}) =-1$ (see \S7 of \cite{isopaper}).
Hence,
$$F(\lambda) = \dfrac{4(\lambda^2 - d)}{\lambda_1^2 - \overline{\lambda_1}^2},
\qquad d:=\realpart(\lambda_1^2).$$
Using the quadratic formula, and the fact that $f$ has a second-order zero at $\infty_-$,
we obtain
\begin{equation}\label{formulaforf}
f(P) = \dfrac{2(\lambda^2 - d + \mu)}{\lambda_1^2 - \overline{\lambda_1}^2}.
\end{equation}

Let $y_1=\lambda(P_1)$ and $y_3 = \lambda(P_3)$.  Because $y_1$ and $y_3$ are
distinct roots of 
\begin{equation}\label{symmetricsigmaq}
(\lambda^2 - \lambda_1^2)(\lambda^2 -\overline{\lambda_1}^2)=\mu^2,
\end{equation}
and $y_1 \ne -y_3$,
then $y_1^2 + y_3^2 = 2d$.
Hence, $F(y_1)=2(y_1^2-y_3^2) = -F(y_3)$, and therefore $f(P_1) \ne f(P_3)$.
Furthermore, using \eqref{formulaforf}, we see that
$$(\lambda_1^2 - \overline{\lambda_1}^2)( f(P_1)+f(P_3)) = 2(y_1^2+y_3^2 - 2d+2\mu)=4\mu.$$
Thus, because we assume $\mu\ne 0$, $f(P_1) \ne -f(P_3)$.
\end{proof}

Assuming that $\mu$ is not zero or one of the exceptional values,
we can define the transfer matrix
$$N = W(T) W(0)^{-1},$$
whose eigenvalues describe the growth of solutions to \eqref{fourbysys0} over one
period.  (In particular, there is a periodic solution if and only if $N$ has eigenvalue one.)
Using the fact that $T = 4\pi/V=2\pi/E$ and $\Omega_1(\sigma \, P) = -\Omega_1(P) + E/2$ on the cut surface, we calculate that
$$W(T) = W(0) \begin{bmatrix}
 e^{2\ri \Omega_1(P_1) T} &0&0&0 \\
 0 & e^{-2\ri\Omega_1(P_1)T} & 0 & 0 \\
 0 & 0 & e^{2\ri \Omega_1(P_3)T} & 0 \\
 0 & 0 & 0 & e^{-2\ri \Omega_1(P_3)T}
 \end{bmatrix}.
$$
Then the characteristic polynomial of the transfer matrix is
$$\det(N - \tau I) = ((\tau+1)^2 - 4\tau\cos(\Omega_1(P)T))((\tau+1)^2 - 4\tau\cos(\Omega_1(P_3)T)).$$
By substituting $\tau=1$ in this formula, we obtain the
\begin{prop}\label{Wperiodic}
For $\mu$ not equal to zero or the
exceptional values in \eqref{muexcept}, the system \eqref{fourbysys0},
where $\ell = 4\ri \mu$, has a periodic solution if and only if
$\sin(\Omega_1(P_1)T)=0$ or $\sin(\Omega_1(P_3)T) = 0$.
\end{prop}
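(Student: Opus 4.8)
The plan is to diagonalize the one-period transfer matrix of \eqref{fourbysys0} and read off exactly when it has a unit Floquet multiplier. Since $\mu$ is neither zero nor one of the exceptional values \eqref{muexcept}, Proposition \ref{nonsingN} makes $W(0)$ invertible, so the transfer matrix $N = W(T)W(0)^{-1}$ is well-defined, and (as already noted) \eqref{fourbysys0} has a $T$-periodic solution precisely when $1$ is an eigenvalue of $N$. I would therefore compute $\det(N - I)$ and show that it vanishes if and only if $\sin(\Omega_1(P_1)T) = 0$ or $\sin(\Omega_1(P_3)T) = 0$.

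The crux is that each column of $W$ is a Floquet solution with an explicit multiplier. For the point $P$ the entries $w_1,w_2$ are given by \eqref{genwform}, where $\Theta_1,\Theta_2$ are $T$-periodic in $\xi$; advancing $\xi$ by $T$ multiplies $w_1$ by $e^{\ri(2\Omega_1(P)-E)T}$ and $w_2$ by $e^{\ri(2\Omega_1(P)+E)T}$. Because $ET = 2\pi$ the prefactors $e^{\mp\ri ET}$ are trivial, so both components --- and hence the entire column, the derivative rows scaling by the same scalar --- acquire the common multiplier $e^{2\ri\Omega_1(P)T}$. Invoking $P_2=\sigma(P_1)$, $P_4=\sigma(P_3)$ together with $\Omega_1(\sigma P) = -\Omega_1(P)+\tfrac{E}{2}$ and once more $ET=2\pi$, the paired columns carry reciprocal multipliers, yielding $W(T) = W(0)\,\mathrm{diag}\bigl(e^{2\ri\Omega_1(P_1)T},\,e^{-2\ri\Omega_1(P_1)T},\,e^{2\ri\Omega_1(P_3)T},\,e^{-2\ri\Omega_1(P_3)T}\bigr)$.

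It then follows that $N = W(0)\,\mathrm{diag}(\cdots)\,W(0)^{-1}$ is conjugate to the diagonal matrix, so its eigenvalues are exactly the four multipliers and its characteristic polynomial is the product of two quadratics, one for each reciprocal pair $e^{\pm 2\ri\Omega_1(P_j)T}$. Evaluating such a quadratic at $\tau = 1$ collapses the pair to $4\sin^2(\Omega_1(P_j)T)$, so $\det(N-I)$ is a nonzero constant times $\sin^2(\Omega_1(P_1)T)\,\sin^2(\Omega_1(P_3)T)$, which vanishes exactly when one of the two sines does. The step bearing all the weight is the identity $W(T)=W(0)\,\mathrm{diag}(\cdots)$: it requires the $T$-periodicity of $\Theta_{1,2}$, the normalization $ET=2\pi$ that neutralizes the $e^{\pm\ri E\xi}$ factors after one period, and the involution rule for $\Omega_1$ under $\sigma$ that forces the columns into reciprocal pairs. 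Once that is secured, the eigenvalue-one computation is routine.
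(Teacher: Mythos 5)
Your proof is correct and takes essentially the same route as the paper: both arguments form the transfer matrix $N = W(T)W(0)^{-1}$ (justified by Proposition \ref{nonsingN}), use the $T$-periodicity of $\Theta_{1,2}$, the normalization $ET = 2\pi$, and the involution rule $\Omega_1(\sigma P) = -\Omega_1(P) + \tfrac{E}{2}$ to obtain $W(T) = W(0)\,\mathrm{diag}\bigl(e^{2\ri\Omega_1(P_1)T}, e^{-2\ri\Omega_1(P_1)T}, e^{2\ri\Omega_1(P_3)T}, e^{-2\ri\Omega_1(P_3)T}\bigr)$, and then detect a unit Floquet multiplier by evaluating the characteristic polynomial at $\tau = 1$, yielding $16\sin^2(\Omega_1(P_1)T)\sin^2(\Omega_1(P_3)T)$. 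Your direct computation of $\det(N - I)$ from the reciprocal eigenvalue pairs even sidesteps a small typographical slip in the paper's displayed characteristic polynomial (which should involve $\cos^2(\Omega_1(P_j)T)$, or equivalently $\cos(2\Omega_1(P_j)T)$), while arriving at the same conclusion.
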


\subsection{Imaginary Zeros of the Evans Function}
For periodic finite-gap solutions of NLS, the {\em Floquet discriminant} \cite{CI05}
takes the form
$$\Delta(\lambda) = 2 \cos\left(\dfrac{2\pi}{V}\Omega_1(P)\right)=
2\cos\left(\dfrac{T}{2} \Omega_1(P)\right).$$
(While $\Omega_1$ is odd with respect to
the involution $\interchange$, the even-ness of the cosine makes $\Delta$
a well-defined function of $\lambda$.)
This has the property that the AKNS system \eqref{AKNS} admits a T-periodic (respectively,
antiperiodic) solution if and only $\Delta = \pm 2$.

The consequence of Propositions \ref{nonsingN} and \ref{Wperiodic} is that, for values of $\ell$ that correspond to nonzero values of $\mu$
excluding those in \eqref{muexcept},
the Evans function is equal to zero if and only if $\Delta=\pm 2$ or $\Delta = 0$ at
an opposite pair out of the four $\lambda$-values corresponding to $\mu$.  When this
happens, the corresponding pair of columns of $W$ are periodic, and
the corresponding value of $\ell$ is a zero of the Evans function of
geometric multiplicity two.
Thus, we can use the Floquet discriminant to find the (nonexceptional)
zeros of the Evans function.\footnote{Even if one of the double
points gives an exceptional value of $\mu$, it still gives a zero of the Evans function.
However, there may be additional zeros at exceptional values of $\mu$, corresponding
to possible periodic solutions of \eqref{fourbysys0} obtained from
solutions of the form \eqref{genwform} by reduction of order.}

In \S 3.5 of \cite{CI05}, the discriminant for genus one finite-gap
solutions is calculated as
\begin{equation}\label{deltaformula}
\Delta(\lambda)=-2\cos\left[2\ri K\left( Z(u) - \beta^2 \dfrac{\cn u \dn u \sn u}{1-\beta^2\sn^2 u}\right)\right],
\end{equation}
where the variable $u$ is related to $\lambda$ by
\begin{equation}\label{snoo}
\sn^2 u = \varphi(\lambda) := \dfrac{(\lambda_1-\blamda_2)(\lambda-\blamda_1)}{(\lambda_1-\blamda_1)(\lambda-\blamda_2)}
\end{equation}
and the parameter $\beta$ and the modulus $k$ are related to the branch points  by
$$k^2 = \dfrac{-(\lambda_1-\blamda_1)(\lambda_2 - \blamda_2)}{|\lambda_1-\blamda_2|^2}, \quad
\beta^2 = \dfrac{\lambda_1 -\blamda_1}{\lambda_1-\blamda_2}.$$

In the symmetric case (i.e., $\lambda_2 = -\blamda_1$) which we are considering here, these specialize to
$$-k' + \ri k = \lambda_1/|\lambda_1|, \qquad \beta^2 = k(k-\ri k'),$$
and the variable $u$ is determined by $\lambda$ as follows.  The function $\sn^2 u$ has periods
$2K$ and $2\ri K'$, and is even in $u$, so it suffices to restrict $u$ to the set
$(0,K] \times (-K',K']$ in the complex plane.  Then there is a unique $u$
in this set for each $\lambda$ in the extended complex plane.  Conversely,
solving \eqref{snoo} for $\lambda$ in terms of $u$ gives
$$\lambda(u) = \dfrac{\lambda_1 \sn^2 u + \blamda_1\beta^{-2}}
{\beta^{-2}-\sn^2 u}.$$

The locus of $\lambda$-values for which $\Delta(\lambda)$ is real
and between $2$ and $-2$ is known as the {\em continuous Floquet spectrum} of
the NLS potential $q$.  For genus one solutions, it consists of the
real axis and two bands terminating at the branch points.  (For
low values of $k$ the bands emerge from the real axis, but
for $k$ sufficiently near 1, the bands become detached
from the real axis (see Figure \ref{spectrumfig}); the
transition occurs around $k \simeq .9089$ \cite{IS}.)

The periodic points (i.e., where $\Delta=\pm 2$) consist of the branch points and
countably many points (known as double points) on the real axis.
Points where $\Delta=0$ will naturally interlace the periodic points along
the real axis, but will also occur along the imaginary axis if
the bands are detached.  When this happens, though,
the corresponding value of $\mu$ is real.  Hence, all the
zeros of the Evans function will occur along the imaginary
axis in the complex $\ell$-plane.

Figures \ref{spectrumfig} and \ref{fig3}
show the level sets of $\Delta(\lambda)$ for several different moduli,
and the zeros of Evans function in the $\ell$-plane.
The latter figures also include the locus of $\ell$-values which are related
by \eqref{symmetricsigmaq}, with $\ell =4\ri \mu$, 
to $\lambda$-values in the continuous spectrum.

\begin{figure}[ht]
\centering
\includegraphics[width=\textwidth]{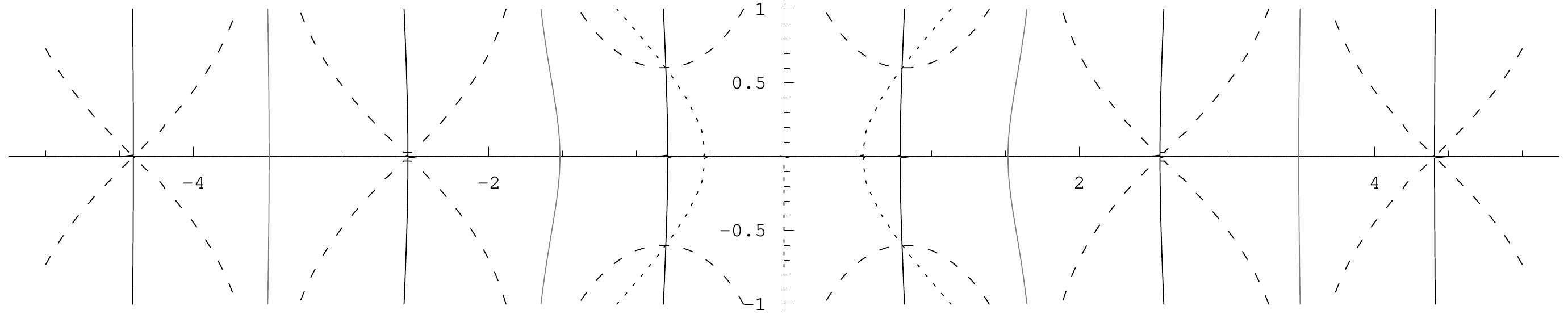}

\bigskip

\includegraphics[width=\textwidth]{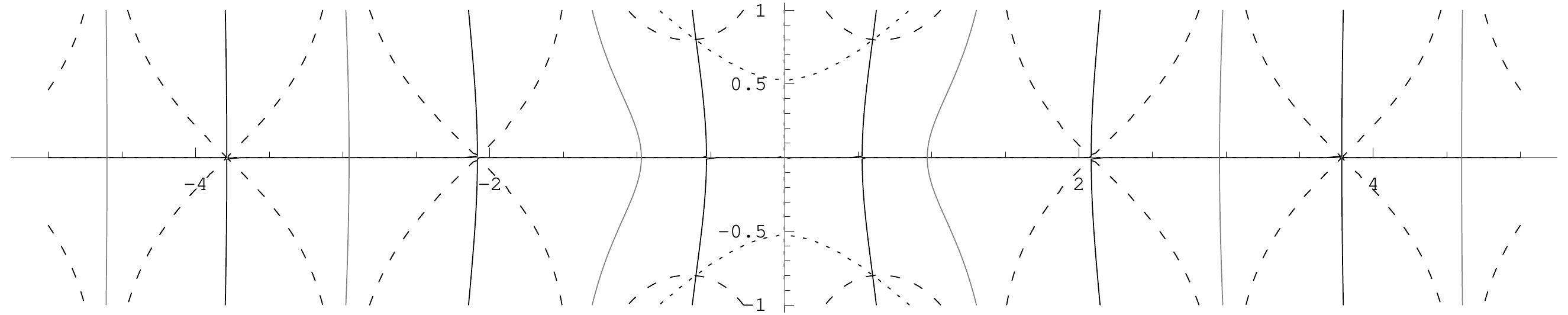}

\bigskip

\includegraphics[width=.7\textwidth]{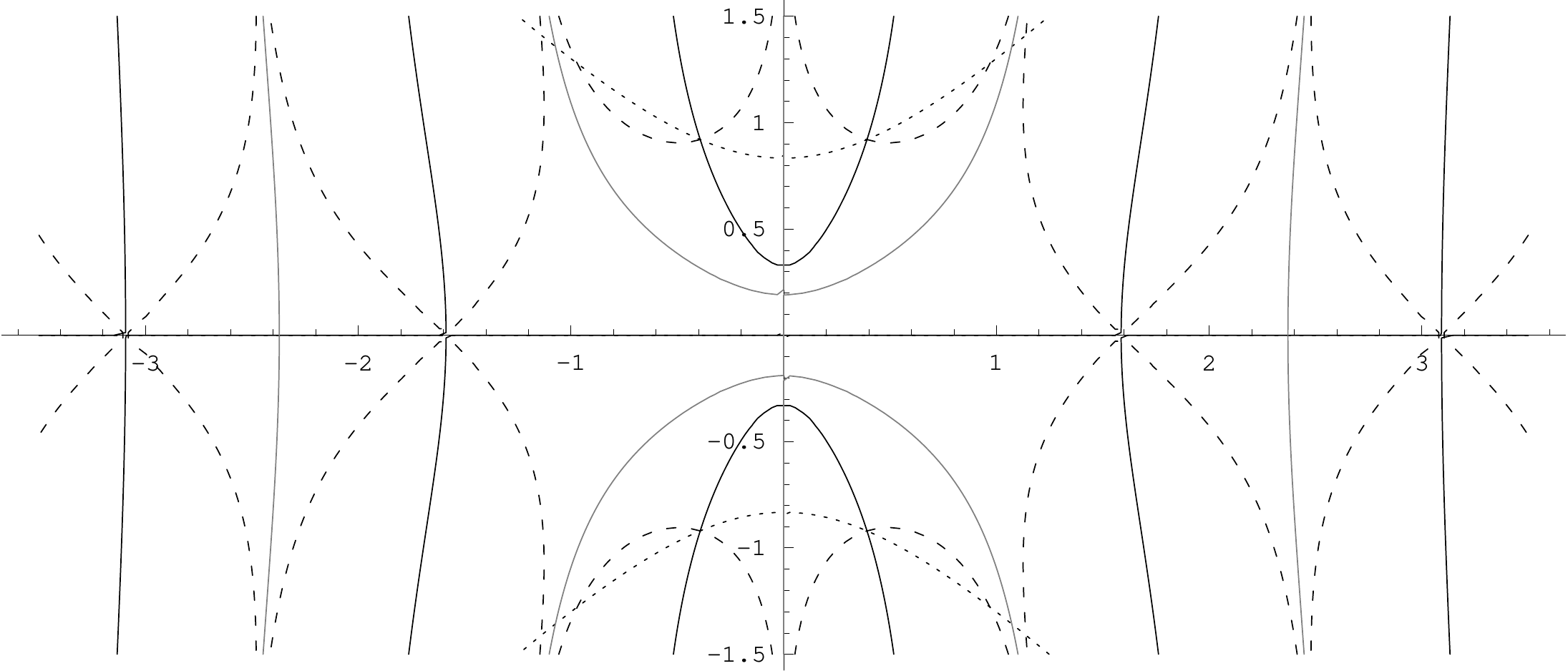}

\caption{Level curves of the Floquet discriminant $\Delta$ in the
complex $\lambda$-plane for elliptic moduli
$k=.6$, $k=.8$ and $k=.92$, respectively.  (We take $|\lambda_1|=\delta/2=1$.)  
In each of the diagrams, the black curves indicate where $\Delta$ is real,
the gray curves where $\Delta$ is imaginary, the dashed curves where
$\realpart \Delta =\pm2$, and the dotted curves where $\mu$ is real (which
includes the real and imaginary axes).  Thus, branch points occur
where dotted intersects dashed, periodic points occur where
black intersects dashed, and points where $\Delta=0$ occur
where black intersects grey.}\label{spectrumfig}
\end{figure}

\begin{figure}[ht]
\centering
\includegraphics[width=2in]{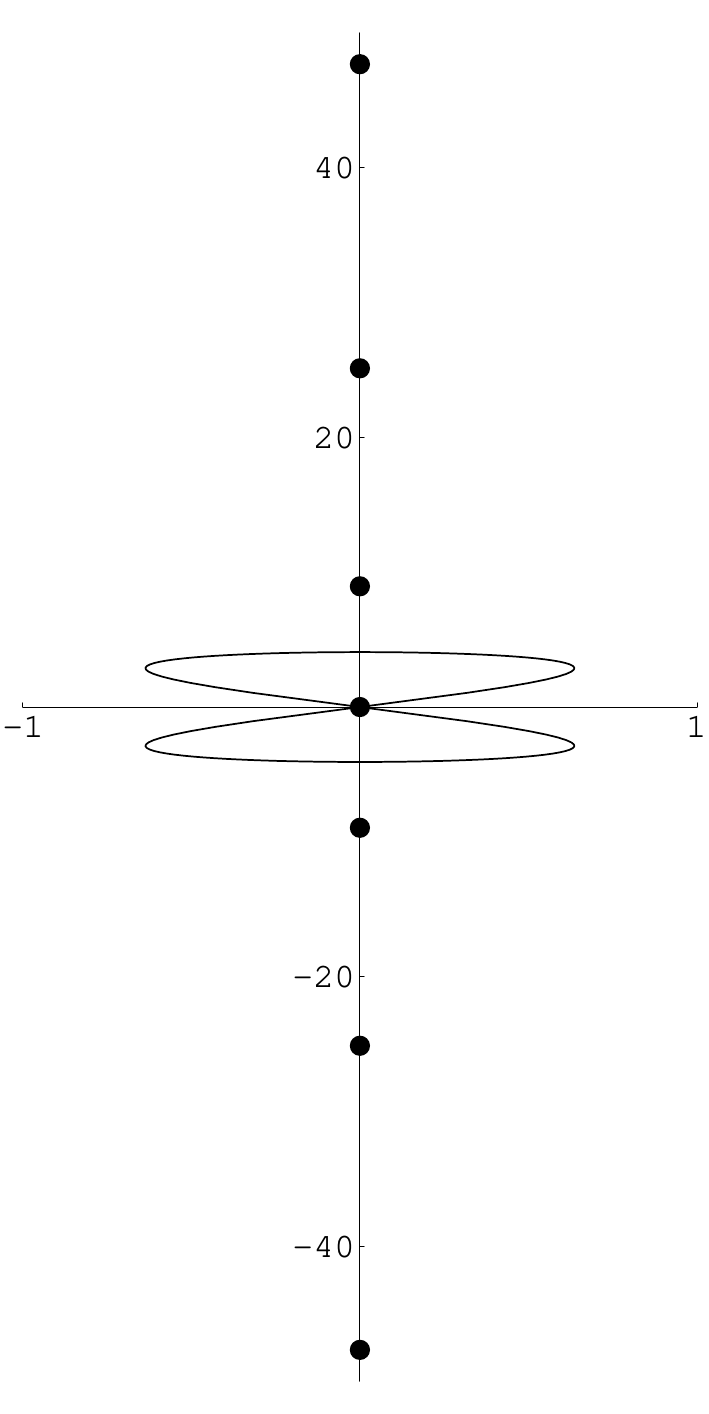}
\includegraphics[width=2in]{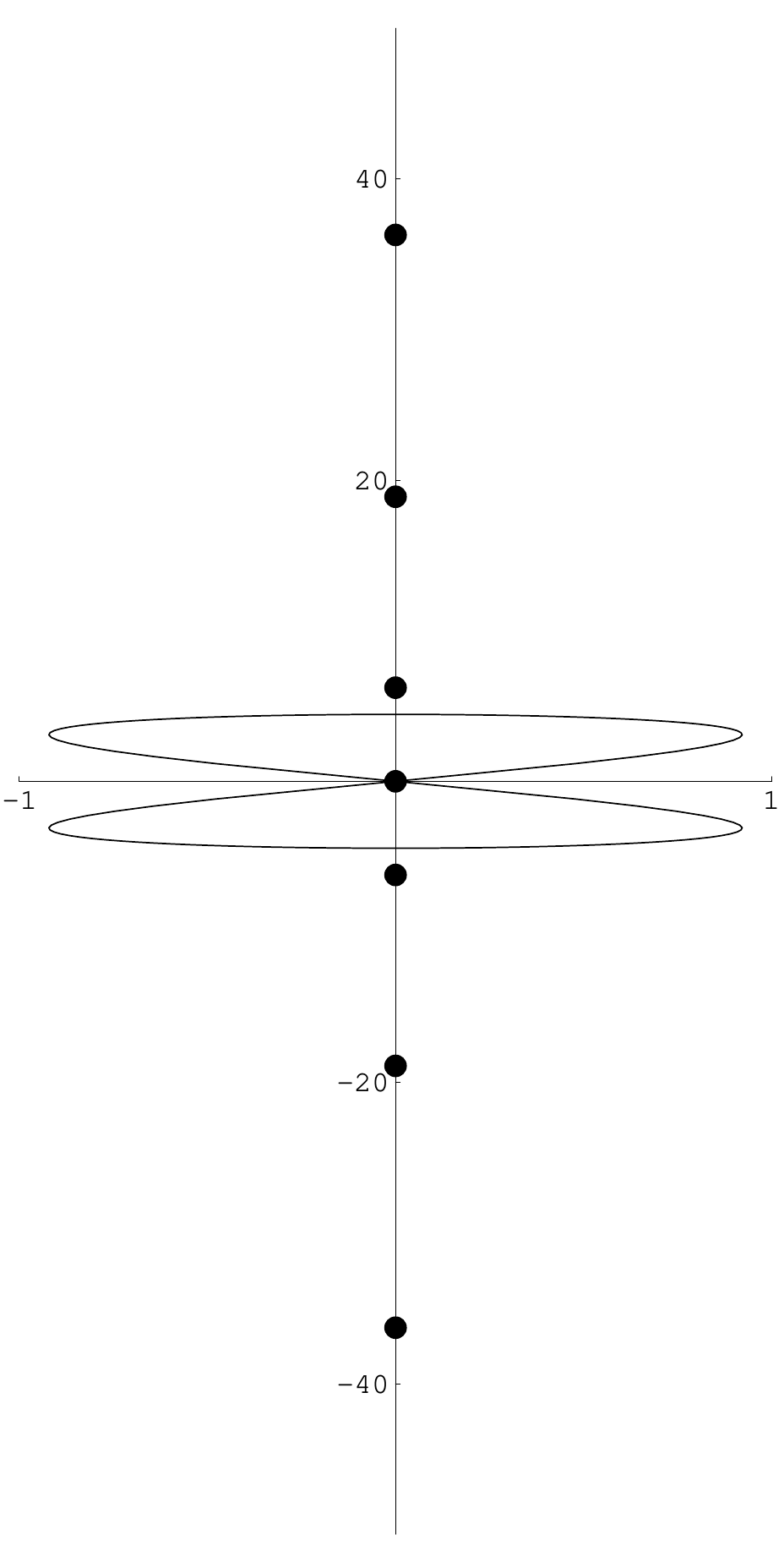}
\includegraphics[width=2in]{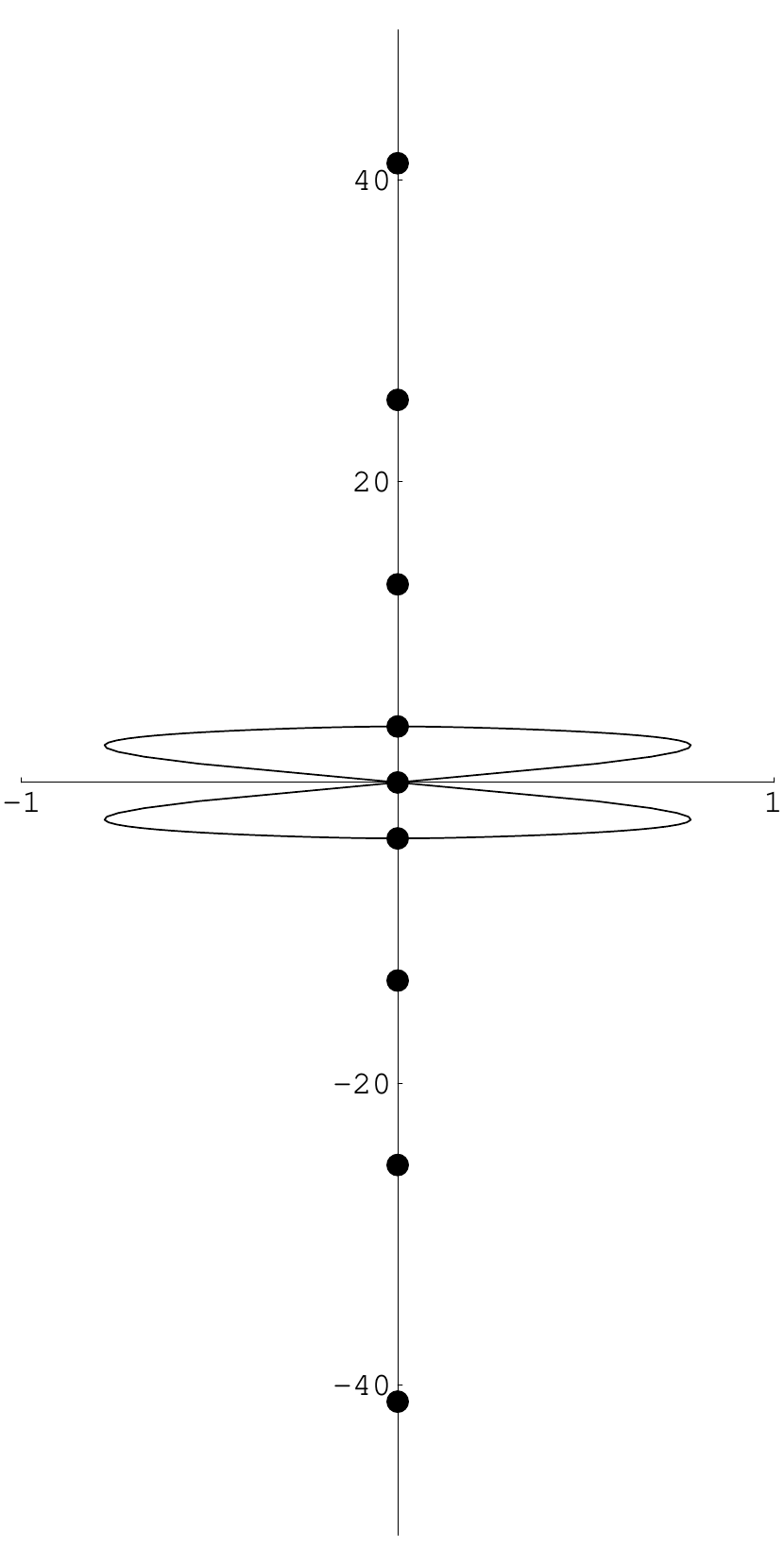}

\caption{Zeros of the Evans function in the complex $\ell$-plane,
for $k=.6$, $k=.8$ and $k=.92$, respectively.  Note that all zeros are
of multiplicity two, except for the origin, which is of multiplicity four.  The
continuous spectrum consists of the imaginary axis and the figure eight curve.}
\label{fig3}
\end{figure}

\end{document}